\documentclass[a4paper,11pt,twoside]{article}
\pdfoutput=1

\usepackage{geometry} 
\geometry{width=0.75\paperwidth,height=0.75\paperheight} 
\geometry{marginratio={1:1,2:3}} 
\usepackage{amsmath}
\usepackage{amssymb}
\usepackage{color}
\usepackage{graphicx}
\usepackage[]{units} 
\usepackage[nosort]{cite}

\usepackage{hyperref}
\usepackage{color}
\definecolor{darkblue}{cmyk}{1,0.5,0,0.2}
\hypersetup{colorlinks, bookmarksnumbered, citecolor=darkblue, linkcolor=darkblue, pdfstartview=FitH, urlcolor=darkblue, linktocpage}
\newcommand{\arXhref}[1]{\href{http://arxiv.org/abs/#1}{\tt#1}} 



\DeclareMathOperator{\re}{Re}

\DeclareMathOperator{\diag}{Diag}

\newcommand{\fracwithdelims}[4]{\left#1 \frac{#3}{#4} \right#2}
\newcommand{\ord}[1]{\mathcal{O}\left( #1 \right)}


\newcommand{\Fig}[1]{Fig.~\ref{fig:#1}}

\newcommand{\Figs}[1]{Figs.~\ref{fig:#1}}

\newcommand{\Eq}[1]{Eq.~(\ref{eq:#1})}
\newcommand{\eq}[1]{eq.~(\ref{eq:#1})}
\newcommand{\Eqs}[1]{Eqs.~(\ref{eq:#1})}
\newcommand{\eqs}[1]{eqs.~(\ref{eq:#1})}
\newcommand{\U}{\text{U}}

\makeatletter
\renewcommand{\section}{\@startsection{section}{1}{0em}%
        {-3.5ex \@plus -1ex \@minus -.2ex}%
        {2.3ex \@plus.2ex}%
        {\normalfont\large\bfseries}}
\renewcommand{\subsection}{\@startsection{subsection}{2}{0em}%
        {-3.25ex\@plus -1ex \@minus -.2ex}%
        {1.5ex \@plus .2ex}%
        {\normalfont\bfseries}}
\renewcommand{\subsubsection}%
        {\@startsection{subsubsection}{3}{0em}%
        {-3.25ex\@plus -1ex \@minus -.2ex}%
        {1.5ex \@plus .2ex}%
        {\normalfont\itshape}}
\makeatother

\newlength{\myem}
\settowidth{\myem}{m}
\newcommand{\sep}[1]{#1}
\newcounter{mysubequation}[equation]
\renewcommand{\themysubequation}{\alph{mysubequation}}
\newcommand{\mytag}{\stepcounter{mysubequation}%
\tag{\theequation\protect\sep{\themysubequation}}}
\newcommand{\globallabel}[1]{\refstepcounter{equation}\label{#1}}

\renewcommand{\det}{\text{det}\,}


\usepackage{amsthm}
\theoremstyle{definition}
\newtheorem{theorem}{Theorem}

\newtheorem*{lemma*}{Lemma}
\newtheorem{proposition}[theorem]{Proposition}
\newtheorem*{proposition*}{Proposition}

\theoremstyle{definition}
\newtheorem*{definition}{Definition}
\newtheorem{example}{Example}

\theoremstyle{remark}

\newcommand{\be}{\begin{equation}}
\newcommand{\ee}{\end{equation}}


\newcommand{\SISSA}{SISSA/ISAS and INFN, I--34136 Trieste, Italy}
\newcommand{\ICTP}{ICTP, Strada Costiera 11, I--34151 Trieste, Italy}


\newcommand{\preprintnumber}{%
SISSA  44/2014/FISI}
 
\newcommand{\titletext}{Stable fermion mass matrices \\ and the charged lepton contribution to neutrino mixing} 
\newcommand{\authortext}{\large David Marzocca$^{\, a}$ and Andrea Romanino$^{\, a, b}$
\medskip\\\em\normalsize 
$\mbox{}^a$\SISSA \\[0.1\baselineskip] 
$\mbox{}^b$\ICTP
}
\newcommand{\abstracttext}{We study the general properties of hierarchical fermion mass matrices in which the small eigenvalues are stable with respect to perturbations of the matrix entries and we consider specific applications to the charged lepton contribution to neutrino mixing. In particular, we show that the latter can account for the whole lepton mixing. In this case a value of $\sin \theta_{13} \gtrsim m_e/m_\mu \sin\theta_{23}  \approx 0.03$, as observed, can be obtained without the need of any fine-tuning, and present data allow to determine the last row of the charged lepton mass matrix with good accuracy. We also consider the case in which the neutrino sector only provides a maximal 12 rotation and show that i) present data provide a $2\sigma$ evidence for a non-vanishing $31$ entry of the charged lepton mass matrix and ii) a plausible texture for the latter can account at the same time for the atmospheric mixing angle, the $\theta_{13}$ angle, and the deviation of the $\theta_{12}$ angle from $\pi/2$ without fine-tuning or tension with data. Finally, we show that the so-called ``inverted order'' of the 12 and 23 rotations in the charged lepton sector can  be obtained without fine-tuning, up to corrections of order $m_e/m_\mu$. }


\title{
\normalsize
\begin{tabular}[t]{l}
\end{tabular}
\hspace*{\fill}
\begin{tabular}[t]{l}\preprintnumber\end{tabular}
\vspace{3\baselineskip}\\\Large\bfseries\titletext\bigskip}
\author{\begin{minipage}[t]{0.8\textwidth}
\normalsize\centering\authortext
\end{minipage}}
\date{}

\begin{document}

\bigskip
\maketitle
\begin{abstract}\vspace*{2mm}\normalsize\noindent
\abstracttext
\end{abstract}\normalsize\vspace{\baselineskip}



\section{Introduction}

The experimental determination of lepton mass and mixing parameters has made remarkable progress in the last 15 years, gradually unveiling an unexpected pattern, which has often challenged the theoretical prejudice. Such an experimental information is essential to the ambitious program of understanding the origin of flavour breaking. This program has been most often carried out in a top-down approach based on flavour symmetries or other organizing principles. In this paper we would like to revisit the problem from a different point of view, in a bottom-up approach based on a general ``stability'' assumption, according to which the smallness of some fermion masses does not arise from special correlations among the entries of the mass matrix, and as a consequence it is stable with respect to small variations of the matrix entries. 

Our analysis will lead to constraints on the structure of fermion mass matrices. The latter contain of course additional parameters that are not physical in the Standard Model (SM) --- their form depends in particular on the basis in flavour space in which they are written. The idea underlying our approach is that in a certain basis in flavour space, associated to the unknown physics from which they originate, the entries of the fermion mass matrix can be considered as \emph{independent} fundamental parameters, i.e.\ parameters that are not correlated, neither as a consequence of a non-abelian symmetry, nor accidentally. We consider such an assumption motivated and timely, as an experimental evidence of such correlations, which would have been welcome as a smoking gun of underlying symmetries, failed so far to show up in the measurement of $\theta_{13}$ and $\theta_{23}$~\cite{Capozzi:2013csa,GonzalezGarcia:2012sz}. For example, neutrino mass models leading to the so-called ``tri-bimaximal'' (TBM) mixing structure \cite{TBM} for the neutrino mass matrix $m_\nu$ require 3 independent correlations among the entries of $m_\nu$ ($m^\nu_{12}=m^\nu_{13}$, $m^\nu_{22}=m^\nu_{33}$, $m^\nu_{11}+m^\nu_{12} = m^\nu_{22} +m^\nu_{23}$), see e.g. ref.~\cite{Altarelli:2005yx}, that can be accounted for by discrete symmetries (with a highly non-trivial construction needed to achieve a consistent and complete picture, including quarks and the charged fermion hierarchies). In the light of recent data, such models require sizeable corrections from the charged lepton sector~\cite{Frampton:2004ud,Romanino:2004ww,ref:ChLepCorr1,Hochmuth:2007wq,Goswami:2009yy,ref:ChLepCorr2,Marzocca:2011dh,Marzocca:2013cr,Girardi:2013sza,Petcov:2014laa}, making the TBM scheme as predictive as simple models without correlations (see however refs.~\cite{Marzocca:2011dh,Marzocca:2013cr,Girardi:2013sza,Petcov:2014laa} for a a possible prediction for the CP phase). 

In the following, we will concentrate in particular on the charged fermion (lepton) mass matrices, which are particularly suited for our approach due to the significant hierarchy among their eigenvalues.\footnote{With an abuse of mathematical terminology, we will use ``eigenvalues'' to mean ``singluar values".} This makes unlikely that the small eigenvalues arise as a consequence of accidental correlations among much larger quantities, an important element in our analysis, and is a sign of a non-anarchical origin of its matrix entries.
We will see that our approach allows to draw interesting conclusions on their contribution to lepton mixing. 

The precise formulation of our assumption will be given in Section~\ref{sec:stability}. Let us see here in a qualitative and intuitive way how assuming the absence of certain special correlations among matrix elements can translate into relevant information on the structure of the fermion mass matrices, using a simple and well known 2 family example: the charged lepton mass matrix $M_E$, restricted to the second and third families,
\begin{equation*}
\label{eq:example}
M_E = 
\begin{pmatrix}
M_{22} & M_{23} \\
M_{32} & M_{33}
\end{pmatrix}  
\end{equation*}
(throughout this paper we will use a ``RL'' convention for the charged fermion mass matrices). 
Suppose that $M_E = U^T_{e^c} M^\text{diag}_E U_e$, where $M^\text{diag}_E = \diag(m_\mu,m_\tau)$ and $U_e$, $U_{e^c}$ are rotations by angles $\theta$, $\theta_c$, respectively, that are both large, $\tan\theta \sim\tan\theta^c \sim 1$. As a consequence, all the four entries of $M_E$ are of the same order of magnitude as the tau mass $m_\tau$, and the observed relative smallness of $m_\mu$ is a consequence of a precise correlation among those four entries,
\begin{equation}
\label{eq:examplecorr}
M_{22} M_{33} - M_{23} M_{32} = 0,
\end{equation}
up to small corrections of relative order $\ord{m_\mu/m_\tau}$. Such a correlation can certainly occur, accidentally or as a consequence of a non-abelian symmetry.
But if we assume that it does not, this translates into constraints on the structure of the matrix $M_E$. Since $m_\tau \sim |M_{33}|$ (see Appendix~\ref{sec:ordering}) and $m_\tau m_\mu = |M_{33}M_{22}-M_{23}M_{32}|$, we  have in fact
\begin{equation}
\label{eq:mutau}
\frac{m_\mu}{m_\tau} = \frac{m_\mu m_\tau}{m_\tau^2} \sim \fracwithdelims{|}{|}{M_{33}M_{22} - M_{23}M_{32}}{M_{33}^2} .
\end{equation}
Requiring, according to our assumption, that the smallness of $m_\mu$ does not result from a fine-tuned cancellation among two correlated terms $M_{33}M_{22}$ and $M_{23}M_{32}$ (as in \eq{examplecorr}), we conclude that 
\begin{equation}
\label{eq:known}
\fracwithdelims{|}{|}{M_{22}}{M_{33}} \lesssim \frac{m_\mu}{m_\tau} \quad\text{and}\quad 
\fracwithdelims{|}{|}{M_{23}M_{32}}{M^2_{33}} \lesssim \frac{m_\mu}{m_\tau}, 
\end{equation}
which provides relevant information on the structure of the $m_E$ matrix. 

Interestingly, the above conditions can equivalently be obtained by requiring that the lightest eigenvalue $m_\mu$, or equivalently the product $m_\mu m_\tau = |\det M|$, is stable with respect to small variations of the matrix entries. The stability of an anomalously small quantity $X(a)$ with respect to a small variation $\Delta a\ll a$ of the variable $a$ is measured by the quantity 
\begin{equation}
\label{eq:D}
\Delta_a = \left| \frac{\Delta X}{\Delta a} \frac{a}{X} \right| \approx \fracwithdelims{|}{|}{\Delta \log X}{\Delta \log a} .
\end{equation}
In the $\Delta a \to 0$ limit, the definition above coincides with the ``fine-tuning'' or ``sensitivity'' parameter often used to measure the naturalness of the Higgs mass (for reasons that will become clear later, we prefer to keep a finite form here). The larger is $\Delta_a$, the more unstable is the smallness of $X$. When $a$ is assumed to be an independent fundamental parameter of the theory, it is desirable to have $\Delta_a \lesssim 1$, in such a way that the smallness of $X(a)$ can be considered ``natural'', i.e.\ not accidental. In the case of our $2\times 2$ mass matrix $M$, we can require that the small quantity $m_2 m_3 = |\det M|$, or $\det M$ itself, is stable with respect to variations of the matrix elements $M_{ij}$ and calculate the corresponding sensitivity parameters: 
\begin{equation}
\label{eq:D2x2}
\Delta_{M_{33}} = \Delta_{M_{22}} = \frac{|M_{22} M_{33}|}{m_2 m_3}, \quad
\Delta_{M_{23}} = \Delta_{M_{32}} = \frac{|M_{23} M_{32}|}{m_2 m_3}.
\end{equation}
Therefore, the assumption in \eq{known} is equivalent to imposing
\begin{equation}
\label{eq:equivalent}
\Delta_{M_{ij}} \lesssim 1
\end{equation}
for (one or) all the entries $M_{ij}$, $ij = 1,2$, and is therefore nothing but a stability assumption, at least if the parameters $M_{ij}$ can be considered independent. 

The arguments above on the structure of our toy $2\times 2$ lepton mass matrix are well known and underlie textures that have been widely considered in the literature. For example, textures with $M_{32}, M_{33} \sim m_\tau$, $M_{22}, M_{23} \sim m_\mu$ have been considered since ref.~\cite{Altarelli:1998ns} as possible explanations for the origin of the large atmospheric angle. The purpose of this paper is to  analyse in a rigorous and complete way the consequence on the structure of a full $3\times 3$ hierarchical mass matrix of the systematic application of the above ideas. 

In Sections~\ref{sec:stability} and~\ref{sec:structure} we precisely define the assumption we make, which generalises \eq{known}, and we study its connection with the absence of correlations in the full determinant and $2\times 2$ sub-determinants of $3\times 3$ fermion matrices. We also give different characterisations of stable mass matrices valid for any $n\times n$ matrix. This Section will make use of a number of useful results on mass matrices collected in the Appendices. In Section~\ref{sec:implications} we will consider examples of applications of our results to the charged lepton contributions to neutrino mixing. In particular, we will revisit the issue of whether the charged lepton contribution can account for all neutrino mixing and show that this is indeed possible without fine-tuning. We will also consider the case in which the charged lepton mass matrix combines with a maximal 12 rotation originating in the neutrino sector and we will see that this also leads to a plausible texture for the lepton mass matrix. In Section~\ref{sec:conclusions} we summarise our results. 

\section{The stability assumption}
\label{sec:stability}

In this Section we define the assumption we make in this paper, in the general case of a $n\times n$ matrix $M$, and we study its consequences, including an explicit equivalent  formulation in terms of constraints on products of matrix elements, which is the basis of the analysis carried out in the next Sections. The proofs of the statements in this Section are given in Appendix~\ref{sec:proofs}. 

Let $M$ be a generic complex $n\times n$ matrix with hierarchical eigenvalues
\begin{equation}
\label{eq:hierarchical}
0 < m_1 \ll \ldots \ll m_n ,
\end{equation}
representing for example a Dirac fermion mass matrix. Throughout this paper we will assume that its eigenvalues are stable in size with respect to small variations of the matrix elements $M_{ij}$. In order to give a precise definition of this assumption, it is useful to define the quantities
\begin{equation}
\label{eq:Sigma}
\Pi_p \equiv \Big(\sum_{k_1<\ldots <k_p} 
m^2_{k_1}\ldots m^2_{k_p} \Big)^{1/2}\approx m_n \ldots m_{n-p+1} ,
\end{equation}
where $p=1\ldots n$. For hierarchical eigenvalues, $\Pi_p$ is essentially the product of the $p$ largest eigenvalues, as shown in \eq{Sigma}. The quantities $\Pi_p$ are useful because, on the one hand, the requirement of the stability of the eigenvalues $m_1,\ldots, m_n$ can be equivalently formulated in terms of the stability of the products $m_n\ldots m_{n-p+1}\approx \Pi_p$;\footnote{Strictly speaking the two requirements are equivalent if $n$ is not too large, say $n \leq 3$. If $n\gg 1$, the stability of all $\Pi_p$ implies the stability of all $m_k$, but not viceversa. This can be seen by observing that $\Delta (\log \Pi_p)/\Delta (\log M_{ij}) \approx  \Delta (\log m_{n}) /\Delta (\log M_{ij}) + \ldots + \Delta (\log m_{n-p+1}) /\Delta (\log M_{ij})$. Therefore, even if the individual eigenvalues have sensitivities of order one, the sensitivity of $\Pi_p$ can be large, for large $p$ and $n$, because of the large number of $\ord{1}$ contributions. On the contrary, a small sensitivity for all $\Pi_p$ guarantees a small sensitivity for all the eigenvalues. Inverting the previous relations one finds in fact: $\Delta (\log m_{n}) /\Delta (\log M_{ij}) \approx \Delta (\log \Pi_1)/\Delta (\log M_{ij})$ and  $\Delta (\log m_{k}) /\Delta (\log M_{ij}) \approx \Delta (\log \Pi_{n-k})/\Delta (\log M_{ij}) - \Delta (\log \Pi_{n-k+1})/\Delta (\log M_{ij})$ for $k < n$.} on the other hand, the quantities $\Pi^2_p$ have a polynomial expression in terms of the matrix elements $M_{ij}$ and their conjugated, see \eq{identity}, which allows to translate the stability requirement into constraints on the matrix elements. 

\begin{definition}[stability assumption]
We say that the mass matrix $M$ is stable with respect to small variations of its matrix elements iff
\begin{equation}
\label{eq:assumption}
\left|
\frac{\Delta \Pi_p}{\Delta M_{ij}} 
\frac{M_{ij}}{\Pi_p} 
\right| \lesssim 1
\quad
\text{for } |\Delta M_{ij}| \ll |M_{ij}| 
\text{ and }i,j,p = 1\ldots n .
\end{equation}
\end{definition}
\noindent As explained, the definition above expresses the stability of the determination of the eigenvalues of $M$ (more precisely the products in \eq{Sigma}) with respect to small variation of any matrix entry. 

\begin{proposition}[relation with fine-tuning] 
\label{p1}
The stability assumption implies 
\begin{equation}
\label{eq:assumption2}
\left|
\frac{\partial \Pi_p}{\partial M_{ij}} 
\frac{M_{ij}}{\Pi_p} 
\right| \lesssim 1
\quad
\text{for } i,j,p = 1\ldots n ,
\end{equation}
but the viceversa is true only for $n=1,2$. 
\end{proposition}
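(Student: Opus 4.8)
The plan is to treat the two implications separately, exploiting the explicit polynomial form of $\Pi_p^2$ recorded in \eq{identity}, namely that $\Pi_p^2=\sum_{|I|=|J|=p}|\det M_{IJ}|^2$ is the sum of the squared moduli of all $p\times p$ minors $M_{IJ}$ of $M$. The only structural input needed is that, as a function of a single entry $M_{ij}$, each minor $\det M_{IJ}$ is affine, $\det M_{IJ}=\mu_{IJ}+c_{IJ}M_{ij}$ with $c_{IJ}=\partial(\det M_{IJ})/\partial M_{ij}$, so that $\Pi_p^2=A+2\re(B\,M_{ij})+C|M_{ij}|^2$ with $A=\sum|\mu_{IJ}|^2$, $B=\sum\bar\mu_{IJ}c_{IJ}$ and $C=\sum_{IJ\ni M_{ij}}|c_{IJ}|^2\ge0$.

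For the forward direction (\eq{assumption}$\Rightarrow$\eq{assumption2}), which holds for every $n$ and is the easy half, I would simply observe that the partial derivative is the $\Delta M_{ij}\to0$ limit of the difference quotient in \eq{assumption}. Since $\Pi_p$ is real, that quotient is read as $|\Delta\Pi_p|/|\Delta M_{ij}|$ and the derivative as the corresponding worst-direction directional derivative; then the left-hand side of \eq{assumption2} is bounded above by the supremum over $|\Delta M_{ij}|\ll|M_{ij}|$ of the left-hand side of \eq{assumption}, and the $\lesssim1$ bound passes to the limit.

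For the converse when $n\le2$, the point is that the only invariants are the extreme ones, $\Pi_1$ and $\Pi_n$, and for these the finite-difference sensitivity never exceeds the derivative sensitivity, so \eq{assumption2} forces \eq{assumption}. For $\Pi_n=|\det M|$ this is because $\det M$ is multilinear, hence affine in each single entry, $\det M=A'+B'M_{ij}$ with $B'$ the cofactor; the $1$-Lipschitz property of the modulus gives $|\Delta\Pi_n|\le|B'|\,|\Delta M_{ij}|$, so the finite-difference sensitivity is at most $|B'M_{ij}|/\Pi_n$, which is exactly the derivative sensitivity appearing in \eq{D2x2}. For $\Pi_1=(\sum_{ij}|M_{ij}|^2)^{1/2}$ the same holds because $\Pi_1$ is a norm, hence $1$-Lipschitz with $\Pi_1\ge|M_{ij}|$, so both sensitivities are automatically $\lesssim1$. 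As for $n=1,2$ every $p$ is of one of these two types, the implication follows.

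The failure for $n\ge3$ is where the real work lies. Now there exist intermediate invariants $\Pi_p$ with $1<p<n$, for which $\Pi_p^2=A+2\re(B\,M_{ij})+C|M_{ij}|^2$ is a genuine quadratic with strictly positive quadratic part $C>0$, not the modulus of an affine function. The crux is to exhibit a hierarchical matrix at which $M_{ij}$ sits at the quadratic minimum $M_{ij}^\ast=-\bar B/C$, where the first-order response vanishes (so \eq{assumption2} holds trivially) while the minimum value $\Pi_p^2=A-|B|^2/C$ is anomalously small; a finite shift $|\Delta M_{ij}|\ll|M_{ij}|$ then changes $\Pi_p^2$ by $C|\Delta M_{ij}|^2$, producing a relative change $\gg1$ and violating \eq{assumption}. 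Concretely I expect the $3\times3$ example
\[
M=\begin{pmatrix}1&1&1\\ 1&1&1\\ 0&0&\delta\end{pmatrix},\qquad \delta\ll1,
\]
for which $\Pi_2^2=2(1-M_{13})^2+4\delta^2$ attains its minimum at $M_{13}=1$ with $\Pi_2=2\delta$, to do the job: the derivative sensitivity with respect to $M_{13}$ vanishes there, whereas a shift $\Delta M_{13}=\delta$ gives a finite-difference sensitivity of order $1/\delta\gg1$. The main obstacle is therefore not the forward implication but isolating this quadratic-minimum mechanism and checking it coexists with a genuine hierarchy $m_1\ll m_2\ll m_3$; here $\det M=0$, so one perturbs $M$ slightly (e.g.\ $M_{22}\to1+\epsilon$ with $\epsilon\ll\delta$) to lift $m_1$ off zero without disturbing the near-minimal value of $\Pi_2$ or the vanishing of its first-order response.
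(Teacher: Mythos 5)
Your architecture matches the paper's: the forward implication is the limit statement, the converse for $n\le 2$ reduces to the cases $p=1$ and $p=n$ (where your Lipschitz observation --- $\Pi_1$ is a norm and $\det M$ is affine in each entry, so the finite-difference sensitivity never exceeds the derivative sensitivity --- is a clean and somewhat more direct route than the paper's, which computes the derivative sensitivity explicitly and passes through the characterisation of stable $2\times2$ matrices), and the failure for $n\ge 3$ must come from an intermediate $\Pi_p$ sitting at a quadratic minimum where the first-order response vanishes but the second-order coefficient is large. That mechanism is exactly the one at work in the paper's Example~\ref{ex:1}.

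The gap is in your concrete $n=3$ counterexample, which is the substantive part of ``the viceversa is true only for $n=1,2$''. After the regularisation you propose ($M_{22}\to 1+\epsilon$ to lift $m_1$ off zero), the matrix is
\begin{equation*}
M=\begin{pmatrix}1&1&1\\ 1&1+\epsilon&1\\ 0&0&\delta\end{pmatrix},\qquad
\det M=\delta\epsilon,\quad m_3\approx\sqrt6,\quad m_2\approx\tfrac{2}{\sqrt6}\,\delta,\quad m_1\approx\tfrac{\epsilon}{2},
\end{equation*}
and it violates \eq{assumption2} at $p=3$: the derivative sensitivity of $\Pi_3=|\det M|$ with respect to $M_{11}$ is $|M_{11}\,\mathrm{cof}\,M_{11}|/|\det M|=(1+\epsilon)\delta/(\delta\epsilon)\approx 1/\epsilon\gg1$. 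So this matrix fails \emph{both} \eq{assumption} and \eq{assumption2} and is not a counterexample to the converse. The defect is structural rather than a matter of tuning: any hierarchical matrix of this shape has $|M_{11}|\sim m_3$ and $|\mathrm{cof}\,M_{11}|=|M_{22}M_{33}-M_{23}M_{32}|\sim\delta\sim m_2m_3$, so $|M_{11}\,\mathrm{cof}\,M_{11}|\sim m_2m_3\gg m_1m_2m_3$ whenever $m_1\ll m_2$, and shrinking $\epsilon$ only makes the ratio worse. What is needed is a zero pattern that keeps every product $|M_{ij}\,\mathrm{cof}\,M_{ij}|$ at or below $|\det M|$ (and keeps the $\Pi_2$ derivative sensitivities of order one) while still containing a cross product $|M_{ih}M_{jk}|\gg m_2m_3$; the paper's Example~\ref{ex:1}, with $M_{11}=\epsilon'$, $M_{22}=\epsilon$, $M_{12}=M_{13}=M_{32}=M_{33}=1$ and all other entries zero, does exactly this ($M_{12}M_{33}=1\gg m_2m_3\sim\epsilon$, while all nine cofactor products are $\le\epsilon'\epsilon=|\det M|$). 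As a secondary point, even ignoring $p=3$ your condition $\epsilon\ll\delta$ is not sufficient for $p=2$: at the quadratic minimum the derivative sensitivity of $\Pi_2$ with respect to $M_{13}$ is of order $\epsilon/\delta^2$, so you would need $\epsilon\lesssim\delta^2$.
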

\noindent An example of $3\times 3$ matrix $M$ that satisfies \eq{assumption2} but not \eq{assumption} is given in the Example~\ref{ex:1} in Appendix~\ref{sec:proofs}. 
The reason why \eq{assumption2} in that case misses the instability is that the latter does not show up when $|\Delta M_{ij}|$ is much smaller than the second eigenvalue (which is always the case in \eq{assumption2}, where the limit $\Delta M_{ij}\to 0$ is taken). This is the reason why we chose to use a definition of stability using finite differences. 

We now show that for $n\leq 3$ the stability assumption translates in practice into simple constraints on products of matrix entries, which correspond to the absence of cancellations in the expressions entering the determinants and sub-determinants of $M$. The constraints in \eqs{2x2} and~(\ref{eq:3x3}) are all we need for the analysis carried out in the next Sections. 
\begin{proposition}[characterization of stable matrices with $n\leq 3$]~
\label{p2}
\begin{enumerate}
\item
For $n=1$, $M$ is trivially always stable; 
\item
For $n=2$, $M$ is stable if and only if
\begin{equation}
\label{eq:2x2}
|M_{11} M_{22}| \lesssim m_1 m_2,
\qquad
|M_{12} M_{21}| \lesssim m_1 m_2 ;
\end{equation}
or equivalently if and only if $|M_{ij}M_{ji}| \lesssim m_i m_j$ for all $i,j=1,2$;
\item
For $n=3$, $M$ is stable if and only if
\begin{equation}
\label{eq:3x3}
\begin{gathered}
|M_{ih} M_{jk}| \lesssim m_2 m_3 \quad \text{for all $i\neq j$, $h\neq k$} \\
|M_{1i} M_{2j} M_{3k}| \lesssim m_1 m_2 m_3 \quad \text{for all $ijk$ permutations of $123$} .
\end{gathered}
\end{equation}
\end{enumerate}
\end{proposition}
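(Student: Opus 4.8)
The plan is to treat the three cases separately and, for each, to reduce the stability requirement to the identity expressing $\Pi_p^2$ as the sum of the squared moduli of all $p\times p$ minors of $M$ (\eq{identity}); I will also use the orderings $\Pi_1\approx m_n$, $\Pi_2\approx m_{n-1}m_n$, $\ldots$, $\Pi_n=m_1\cdots m_n=|\det M|$ of Appendix~\ref{sec:ordering}. The case $n=1$ is immediate. For every $n$, $\Pi_1^2=\sum_{ij}|M_{ij}|^2$ is a sum of non-negative terms, so no cancellation is possible and stability with respect to each $M_{ij}$ holds automatically. For $n=2$ the only non-trivial quantity is $\Pi_2=|\det M|$; since $\det M$ is affine in each entry, the finite difference coincides with the derivative, and the worst-case sensitivity to $M_{ij}$ equals $|M_{ij}\,\partial_{M_{ij}}\det M|/|\det M|$, i.e. the modulus of the monomial of $\det M$ containing $M_{ij}$, divided by $m_1m_2$. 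Requiring this to be $\lesssim1$ for all four entries is exactly \eq{2x2}, and the converse is the same computation run backwards.

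For $n=3$ I would fix an entry $M_{ab}$ and write $\Pi_p^2$ as a real quadratic in it, $\Pi_p^2=A\,|M_{ab}|^2+2\,\re(M_{ab}B)+C$ with $A,C\ge0$ and $A,B,C$ independent of $M_{ab}$. The point of using finite differences (rather than \eq{assumption2}) is that varying $M_{ab}$ by a finite $\Delta$ activates the $A\,|\Delta|^2$ term, so the worst-case sensitivity is of order $(A\,|M_{ab}|^2+|B|\,|M_{ab}|)/\Pi_p^2$; the delicate but routine point is to choose the phase and size of $\Delta$ so that this quadratic piece is not accidentally cancelled by the linear one. For $p=2$ one has $A\,|M_{ab}|^2=\sum_{c\ne a,\,d\ne b}|M_{ab}M_{cd}|^2$ and $\Pi_2\approx m_2m_3$, so stability in $M_{ab}$ is equivalent to $|M_{ab}M_{cd}|\lesssim m_2m_3$ for all $c\ne a$, $d\ne b$; letting $(a,b)$ vary this is precisely the first line of \eq{3x3}, and a Cauchy--Schwarz bound on $|B|\,|M_{ab}|$ shows these same conditions also guarantee $\Pi_2$-stability. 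For $p=3$, $\det M$ is again affine in $M_{ab}$, so one recovers only the cofactor bounds $|M_{ab}A_{ab}|\lesssim m_1m_2m_3$, i.e. the statement that the sum of the two determinant monomials through $M_{ab}$ is $\lesssim m_1m_2m_3$. This matches Proposition~\ref{p1}: the finite-difference definition is genuinely stronger than \eq{assumption2} only at $p=2$.

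The crux is then to show that, given the first line of \eq{3x3}, the pairwise cofactor bounds are equivalent to the per-monomial bounds in the second line. One direction is the triangle inequality. For the other I would argue by contradiction: suppose a signed product $t_{\sigma_0}=\pm M_{1i}M_{2j}M_{3k}$ has modulus $L\gg m_1m_2m_3$. The six determinant monomials, joined by an edge whenever they share an entry, form the complete bipartite graph $K_{3,3}$ between even and odd permutations, and the cofactor bounds state exactly that every even--odd pair sums to $\lesssim m_1m_2m_3$. Since this graph is connected, all six raw products $p_\sigma=M_{1\sigma(1)}M_{2\sigma(2)}M_{3\sigma(3)}$ must then lie within $\ord{m_1m_2m_3}$ of a common value $P$ with $|P|\approx L$. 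Consequently every entry obeys $|M_{lm}|\gtrsim L/(m_2m_3)$, because any product through it is $\approx L$ while the complementary pair of factors is $\lesssim m_2m_3$ by the first line of \eq{3x3}. But each $2\times2$ minor equals $(p_\sigma-p_{\sigma'})/M_{lm}$ for its complementary entry $M_{lm}$, hence has modulus $\lesssim m_1m_2m_3/|M_{lm}|\lesssim m_1(m_2m_3)^2/L\ll m_2m_3$; this forces $\Pi_2\ll m_2m_3$, contradicting $\Pi_2\approx m_2m_3$. Hence no monomial can exceed $m_1m_2m_3$, which is the second line of \eq{3x3}.

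I expect this last implication to be the main obstacle: the single-entry finite differences of $\Pi_3$ only control sums of monomials (the cofactors), and isolating the individual monomials genuinely requires combining the $\Pi_3$ bounds with the $\Pi_2$ bounds through the near-rank-one/$K_{3,3}$ mechanism above. The only other point needing care is the phase-and-size optimisation of $\Delta$ in the finite-difference sensitivity for $p=2$, which is precisely what makes \eq{assumption} strictly stronger than \eq{assumption2}.
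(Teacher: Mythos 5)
Your proposal is correct, and its overall architecture coincides with the paper's: reduce stability to that of $\Pi_1,\Pi_2,\Pi_3$ via \eq{identity}, note that $p=1$ is automatic, extract the pairwise bounds $|M_{ih}M_{jk}|\lesssim m_2m_3$ from the quadratic-in-$|M_{ab}|$ piece of $\Pi_2^2$ (which, as you and the paper both emphasize, is visible only to the finite-difference definition), and observe that $\Pi_3$-stability yields only the cofactor bounds $|M_{ab}\,\text{cof}M_{ab}|\lesssim m_1m_2m_3$ because $\det M$ is affine in each entry. Where you genuinely diverge is in the crux you correctly identify: upgrading the cofactor bounds to the per-monomial bounds of \eq{3x3}. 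The paper argues directly: since $\Pi_2\approx m_2m_3$, some $2\times2$ minor, WLOG $\det M_{[23]}$, is unsuppressed; the cofactor bound at $(1,1)$ then forces $|M_{11}|\lesssim m_1$, which combined with $|M_{22}M_{33}|\lesssim m_2m_3$ gives one monomial bound, and the remaining five follow by repeatedly applying the triangle inequality to the other cofactor bounds. You instead argue by contradiction through the $K_{3,3}$ incidence structure of the six determinant monomials: if one raw product had modulus $L\gg m_1m_2m_3$, the nine cofactor bounds (one per edge) would force all six products to be $\approx L$, hence every entry to satisfy $|M_{lm}|\gtrsim L/(m_2m_3)$ by the first line of \eq{3x3}, hence every $2\times2$ minor, being $\text{cof}M_{lm}$ for its complementary entry, to be $\lesssim m_1(m_2m_3)^2/L\ll m_2m_3$, contradicting $\Pi_2\approx m_2m_3$. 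Both arguments consume exactly the same inputs (the cofactor bounds, the pairwise bounds, and $\Pi_2\approx m_2m_3$); yours is more symmetric, avoiding the WLOG anchoring and the case-by-case propagation, and makes transparent \emph{why} a single oversized monomial is impossible, while the paper's version is constructive and exhibits explicitly which of the nine constraints delivers each of the six monomial bounds. One cosmetic imprecision: for $n=2$ the finite difference of $\Pi_2=|\det M|$ does not literally "coincide with the derivative" since the modulus is not holomorphic, but the worst-case-over-phase sensitivity is $|M_{ij}\,\text{cof}M_{ij}|/(m_1m_2)$ up to a correction of relative size $|\Delta M_{ij}/M_{ij}|$, so your conclusion stands.
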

\noindent The interpretation of the above characterisation is clear in the light of the results on mass matrices in Appendix~\ref{sec:results}. In particular, \eq{2x2} can be interpreted as the absence of cancellations in the RHS of $m_1 m_2 = |M_{11}M_{22} - M_{12} M_{21}|$, as discussed in the Introduction. As for the $n=3$ case, an analogous interpretation is possible in the light of the the fact that the absolute value of the determinant of any $p\times p$ submatrix of $M$ (in the case of \eq{3x3}, the $2\times 2$ submatrix made of the $M_{ih}, M_{jk}, M_{ik}, M_{jh}$ elements, with determinant $M_{ih} M_{jk} - M_{ik} M_{jh}$) is always smaller or equal to the product of the $p$ largest eigenvalues (in the case of \eq{3x3}, the product $m_2 m_3$). Moreover, $m_1 m_2 m_3 = |\sum_{ijk\text{ perm.\ of }123} M_{1i} M_{2j} M_{3k}|$, so that the last condition in \eq{3x3} can also be interpreted as the absence of cancellations in the previous expression for $m_1 m_2 m_3$. 

Note that the connection outlined above between the stability of $M$ and the absence of cancellations in the determinant and sub-determinants, although intuitive, is not trivial. For example, it does not hold for $n\geq 4$, as shown by the Example~\ref{ex:2} in Appendix~\ref{sec:proofs}. 

For completeness, we also give two additional characterisations of stable hierarchical matrices that emerge in the proof of the previous proposition. Let us first fix a matrix element $M_{ij}$ and define $\hat M_{(ij)}$ to be the matrix obtained from $M$ by setting to zero all the elements in the row $i$ and column $j$ except $M_{ij}$ and $\check M_{(ij)}$ the matrix with the element $ij$ set to zero, as in \eq{hat}. Let us also fix $1\leq p \leq n$ and denote by $\hat \Pi_{(ij)p}$ and $\check \Pi_{(ij)p}$ the quantities in \eq{Sigma} associated to $\hat M_{(ij)}$ and $\check M_{(ij)}$ respectively. 
\begin{proposition}[general characterisation of stable matrices] 
\label{p3}
The following three statements are equivalent:
\begin{enumerate}
\item \Eq{assumption} holds for given $p,i,j$;
\item $\hat \Pi_{(ij)p} \lesssim \Pi_p$;
\item $\check \Pi_{(ij)p} \lesssim \Pi_p$. 
\end{enumerate}
Therefore the stability of the mass matrix is equivalent to requiring 2.\ or 3.\ for all $i,j,p$.\end{proposition}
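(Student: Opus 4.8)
The plan is to reduce the whole proposition to the polynomial identity \eq{identity}, which writes $\Pi_p^2$ as the sum of the squared moduli of all $p\times p$ minors of $M$ (the Cauchy--Binet / compound-matrix relation, $\Pi_p^2=\sum_{|S|=|T|=p}|\det M_{S,T}|^2$, where $S,T$ run over $p$-element subsets of rows and columns). First I would use that every minor is linear in the single entry $M_{ij}$: a minor $\det M_{S,T}$ either does not contain $M_{ij}$, or equals $M_{ij}\,C_{S,T}+D_{S,T}$ with $C_{S,T}$ the corresponding cofactor. Substituting into the identity exhibits $\Pi_p^2$ as a real quadratic polynomial in $M_{ij},\bar M_{ij}$,
\begin{equation*}
\Pi_p^2 = A\,|M_{ij}|^2 + 2\,\re(M_{ij}\Lambda) + B ,
\end{equation*}
with $A=\sum|C_{S,T}|^2\ge 0$, $B=\sum|D_{S,T}|^2\ge 0$, and $\Lambda=\sum C_{S,T}\bar D_{S,T}$. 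The role of the two auxiliary matrices is precisely to isolate the first and last terms: zeroing the whole row $i$ and column $j$ except $M_{ij}$ annihilates every minor except those proportional to $M_{ij}$, giving $\hat\Pi_{(ij)p}^2=A\,|M_{ij}|^2$, while zeroing only $M_{ij}$ gives $\check\Pi_{(ij)p}^2=B$ (abbreviating $\hat\Pi\equiv\hat\Pi_{(ij)p}$, $\check\Pi\equiv\check\Pi_{(ij)p}$ below).

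Next I would bound the cross term. Cauchy--Schwarz applied to the vectors of cofactors and of the $M_{ij}$-independent parts of the minors gives $|M_{ij}\Lambda|\le\hat\Pi\,\check\Pi$, whence
\begin{equation*}
\big|\hat\Pi-\check\Pi\big| \;\le\; \Pi_p \;\le\; \hat\Pi+\check\Pi .
\end{equation*}
This ``triangle inequality'' between $\Pi_p$, $\hat\Pi$ and $\check\Pi$ immediately yields the equivalence of statements 2 and 3: if $\hat\Pi\lesssim\Pi_p$ then the left inequality forces $\check\Pi\le\Pi_p+\hat\Pi\lesssim\Pi_p$, and the reverse implication is identical after swapping $\hat\Pi$ and $\check\Pi$.

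For the equivalence of 1 with 2 I would analyse the finite-difference sensitivity directly through the quadratic form. Perturbing $M_{ij}\to M_{ij}(1+\epsilon)$ with $|\epsilon|\ll 1$ gives
\begin{equation*}
\Delta(\Pi_p^2) = 2\,\re\!\big(\epsilon\,(\hat\Pi^2 + M_{ij}\Lambda)\big) + \hat\Pi^2\,|\epsilon|^2 .
\end{equation*}
The easy direction $2\Rightarrow 1$ is then immediate: $\hat\Pi\lesssim\Pi_p$ together with the triangle inequality bounds both coefficients by $\Pi_p^2$ (the linear one by $\hat\Pi^2+\hat\Pi\check\Pi\lesssim\Pi_p^2$), so $|\Delta\Pi_p^2|\lesssim\Pi_p^2|\epsilon|$, the quadratic piece being negligible, and the relative change of $\Pi_p$ is $\lesssim|\epsilon|$, i.e.\ the sensitivity in \eq{assumption} is $\lesssim 1$.

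The hard direction, and the main obstacle, is the contrapositive $\text{not-}2\Rightarrow\text{not-}1$: that $\hat\Pi\gg\Pi_p$ must produce a large finite-difference sensitivity. The difficulty is exactly the effect flagged after Proposition~\ref{p1}: the linear coefficient $\hat\Pi^2+M_{ij}\Lambda$ may be anomalously small, even zero, because $|M_{ij}\Lambda|$ can reach $\hat\Pi\check\Pi\approx\hat\Pi^2$ (the triangle inequality forces $\check\Pi\approx\hat\Pi$ once $\hat\Pi\gg\Pi_p$), so the derivative condition \eq{assumption2} can hold while the true instability hides in the quadratic term. I would overcome this by choosing the perturbation at the intermediate scale $|\epsilon|\sim(\Pi_p/\hat\Pi)^{1/2}$, with its phase aligned to the linear coefficient so the two contributions cannot cancel. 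This $|\epsilon|$ is still $\ll 1$ yet large enough that the quadratic term dominates, $\hat\Pi^2|\epsilon|^2\sim\hat\Pi\,\Pi_p\gg\Pi_p^2$; hence $\Pi_p$ jumps to $\sim(\hat\Pi\Pi_p)^{1/2}$ and the sensitivity is of order $\hat\Pi/\Pi_p\gg 1$, violating \eq{assumption}. This is the step where the finite-difference definition is genuinely needed and a derivative-based one would fail. Combining the two directions gives $1\Leftrightarrow 2$, and with $2\Leftrightarrow 3$ the three statements are equivalent; the final sentence of the proposition is then just the observation that stability of $M$ means \eq{assumption} holds for all $i,j,p$.
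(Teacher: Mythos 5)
Your argument is correct and is essentially the paper's own proof in lightly different clothing: the quadratic polynomial $\Pi_p^2=A|M_{ij}|^2+2\re(M_{ij}\Lambda)+B$ is exactly the paper's expansion of $\Delta\Pi_p^2$ in terms of the minors $v_\alpha$ and $w_\alpha=\det\hat M_{[i\ldots][j\ldots]}$, the Cauchy--Schwarz bounds are the same, and your key move for the hard direction --- a finite perturbation at the intermediate scale with phase chosen so the linear and quadratic pieces cannot cancel --- is precisely the paper's choice $\Delta M_{ij}=(k/R)M_{ij}e^{i\phi}$ (the paper merely organises the implications as the cycle $2\Rightarrow1\Rightarrow3\Rightarrow2$ rather than $2\Leftrightarrow3$ plus $1\Leftrightarrow2$). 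The one small imprecision, harmless at the $\lesssim$ level, is that $\hat\Pi_{(ij)p}^2$ equals $A|M_{ij}|^2$ only up to the additive contribution of the minors avoiding both row $i$ and column $j$, which survive unchanged in $\hat M_{(ij)}$ but are bounded by $\Pi_p^2$.
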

\noindent The intuitive meaning of the points 2.\ and 3.\ above has again to do with stability, as they state that setting to zero one of the matrix entries (or alternatively all the entries on the same row and column except that one) does not give rise to a drastic change of the structure of the eigenvalues. 

Appendices~\ref{sec:results} and~\ref{sec:proofs} contain a number of additional results, as well as the proofs of the statements in this Section. 

\section{General structure of stable charged fermion (lepton) mass matrices}
\label{sec:structure}

In this Section, we will  describe the general structure of a $3\times 3$ hierarchical fermion mass matrix satisfying the stability assumption, i.e.\ such that the hierarchy of its eigenvalues does not require accidental or dynamical correlations among its entries. 

Let us start with a remark on the ordering of rows and columns of $M$: it is always possible to order the rows and columns of $M$ in such a way that the structure of the matrix follows the hierarchy of the eigenvalues, i.e.\ in such a way that the third row and column are associated to the third and largest eigenvalue, and so on. More precisely, it is possible to order the rows and columns of $M$ in such a way that
\begin{equation}
\label{eq:ordering}
\begin{aligned}
|M_{33}| &= \ord{m_3}, \\[0.4mm]
|\det M_{[23]}| &= \ord{m_2 m_3}, \quad\text{and of course} \\
|\det M| &= m_1 m_2 m_3,
\end{aligned}
\end{equation}
where $M_{[23]}$ is the $2\times 2$ sub-matrix of $M$ corresponding to the second and third rows and columns (as in eqs.~(\ref{eq:minors}) and~(\ref{eq:square})). We will assume that this it the case in the following. 

En passant, one can wonder how far from $m_3$ and $m_2 m_3$ can $|M_{33}|$ and $|\det M_{[23]}|$ get in the equations above, or what exactly $\ord{m_3}$ and $\ord{m_2 m_3}$ mean. In Appendix~\ref{sec:ordering} we show that we can always make $|M_{33}| \gtrapprox m_3/\sqrt{3}\approx 0.6 \, m_3$ and $|\det M_{[23]}| \gtrapprox m_2 m_3/\sqrt{6} \approx 0.4\, m_2 m_3$. If $M$ did not satisfy the stability assumption (but is hierarchical), the bounds above would be qualitatively different, $|M_{33}| \gtrapprox m_3/3$ and $|\det M_{[23]}| \gtrapprox m_2 m_3/6$. 

Once the rows and columns of $M$ have been ordered as above, a stable $M$ is subject to the following constraints:
\begin{itemize}
\item
$|M_{3i}|,|M_{i3}|\leq m_3$, $i=1,2,3$;
\item
$|M_{2i}|, |M_{i2}| \lesssim m_2$, $i=1,2$;
\item
$|M_{11}| \lesssim m_1$;
\item
$|M_{ij}M_{ji}| \lesssim m_i m_j$ for all $i,j=1,2,3$ except $ij=13,31$;
\item
$|M_{13} M_{31}| \lesssim m_2 m_3$;
\item
$|M_{13} M_{32}|, |M_{23} M_{31}| \lesssim m_2 m_3$; 
\item
$|M_{13} M_{21} M_{32}|, |M_{31} M_{12} M_{23}|, |M_{13} M_{22} M_{31}| \lesssim m_1 m_2 m_3$.  
\end{itemize}
Viceversa, an hierarchical $M$ satisfying the constraints above (and having $m_1,m_2,m_3$ as eigenvalues) is automatically stable. 

While in the $2\times 2$ case $M$ satisfies the stability assumption iff $|M_{ij}M_{ji}| \lesssim m_i m_j$ for all $i,j=1,2$, in the $3\times 3$ case the corresponding constraint turns out to be true for all $i,j=1,2,3$ except for $ij=13,31$. We can then consider, in turn, two ranges for $|M_{13} M_{31}|$: $|M_{13} M_{31}| \lesssim m_1 m_3$ and the (somewhat less expected) $m_1 m_3 \ll |M_{13} M_{31}| \lesssim m_2 m_3$. In this second case, which we consider first, the structure of $M$ turns out to be particularly constrained. 

\subsection{$m_1 m_3 \ll |M_{13} M_{31}| \lesssim m_2 m_3$}

In this case, the constraints above force $|M_{22}| \ll m_2$, so that $|\det M_{[23]}| = \ord{m_2 m_3}$ must be accounted for by $|M_{23}M_{32}| \sim m_2 m_3$. The general structure of $M$ can then be described in terms of the size of the product $|M_{13}M_{31}|$,
\begin{equation}
\label{eq:pars}
k \equiv \frac{|M_{13}M_{31}|}{m_1 m_3}, 
\end{equation}
and in terms of the asymmetry, or degree of ``lopsidedness'', between $|M_{32}|$ and $|M_{23}|$ ($R_{23}$) and between $|M_{31}|$ and $|M_{13}|$ ($R_{13}$, or $R_{12} = R_{13}/R_{23}$),
\begin{equation}
\label{eq:say}
R_{23} \equiv \sqrt{\fracwithdelims{|}{|}{M_{32}}{M_{23}}}, \quad
R_{12} R_{23} \equiv \sqrt{\fracwithdelims{|}{|}{M_{31}}{M_{13}}} .
\end{equation}
The matrix $|M|$ of absolute values of the entries of $M$ has then the following structure
\begin{equation}
\label{eq:Mcase1}
|M| = \begin{pmatrix}
\lesssim m_1 &
\lesssim \sqrt{m_1 m_2}/(R_{12} \sqrt{k}) &
\sqrt{m_1 m_3 k} / (R_{12}  R_{23}) \\[1mm]
\lesssim \sqrt{m_1 m_2} R_{12}/\sqrt{k} &
\lesssim m_2/k &
\sqrt{m_2 m_3} / R_{23} \\[1mm]
\sqrt{m_1 m_3 k} \, (R_{12}  R_{23}) &
\sqrt{m_2 m_3} \, R_{23} &
\sim m_3
\end{pmatrix}, 
\end{equation}
where
\begin{equation}
\label{eq:ranges1}
1 \ll k \lesssim \frac{m_2}{m_1}, \quad 
\sqrt{\frac{m_2}{m_3}} \lesssim R_{23} \lesssim \sqrt{\frac{m_3}{m_2}} , \quad
\sqrt{\frac{m_1}{m_2}k} \lesssim R_{12}  \lesssim \sqrt{\frac{m_2}{m_1}\frac{1}{k}} .
\end{equation}
The largest stable values of $k$, $k\sim m_2/m_1$, require
\begin{equation}
\label{eq:Mcase1bis}
|M| = \begin{pmatrix}
\lesssim m_1 &
\lesssim m_1 &
\sim \sqrt{m_2 m_3} / R_{23} \\[1mm]
\lesssim m_1 &
\lesssim m_1 &
\sim \sqrt{m_2 m_3} / R_{23} \\[1mm]
\sim \sqrt{m_2 m_3} \, R_{23} &
\sim \sqrt{m_2 m_3} \, R_{23} &
\sim m_3
\end{pmatrix}, 
\end{equation}
where the lopsideness factor $R_{23}$ is bounded as in \eq{ranges1}. 

\subsection{$|M_{13} M_{31}| \lesssim m_1 m_3$}

In this case, $|M_{ij}M_{ji}| \lesssim m_i m_j$ holds for all $i,j=1,2,3$. A general parameterisation similar to equation \eq{Mcase1} is still possible, although it turns out to be more complicated. The lopsidedness parameters $R_{ij}$ can be defined only if the corresponding $|M_{ij}M_{ji}|$ is non-zero. If that is the case, we can define
\begin{equation}
\begin{gathered}
\label{eq:pars2}
k_{ij} \equiv \frac{|M_{ij}M_{ji}|}{m_i m_j}, \quad i\leq j, \quad \text{and}\quad
R_{ij} \equiv \sqrt{\fracwithdelims{|}{|}{M_{ji}}{M_{ij}}}, \quad i < j . 
\end{gathered}
\end{equation}
In terms of the above parameters we can then write 
\begin{equation}
\label{eq:Mcase2}
|M| = \begin{pmatrix}
\sqrt{k_{11}} \, m_1 &
\sqrt{m_1 m_2 k_{12}}/R_{12} &
\sqrt{m_1 m_3 k_{13}} / R_{13} \\[1mm]
\sqrt{m_1 m_2 k_{12}} \, R_{12} &
\sqrt{k_{22}} \, m_2 &
\sqrt{m_1 m_2 k_{23}} / R_{23} \\[1mm]
\sqrt{m_1 m_3 k_{13}} \, R_{13}  &
\sqrt{m_1 m_2 k_{23}} \, R_{23} &
\sqrt{k_{33}} \, m_3
\end{pmatrix}, 
\end{equation}
where 
\begin{equation}
\label{eq:conditions}
\begin{gathered}
k_{ij} \lesssim 1, \\
\sqrt{\frac{m_i}{m_j} k_{ij}} \lesssim R_{ij} \lesssim \sqrt{\frac{m_j}{m_i}\frac{1}{k_{ij}}}
 \\
\sqrt{\frac{m_1}{m_2} k_{23}k_{13}} \lesssim \frac{R_{13}}{R_{23}} \lesssim \sqrt{\frac{m_2}{m_1} \frac{1}{k_{23}k_{13}}} , \quad
\sqrt{k_{12}k_{23}k_{13}} \lesssim \frac{R_{23}R_{12}}{R_{13}} \lesssim \frac{1}{\sqrt{k_{12}k_{23}k_{13}}} . 
\end{gathered}
\end{equation}
The formulas above also apply to the previous case, and thus become general, provided that the constraint $k_{13} \lesssim 1$ is generalised to $k_{13} \lesssim m_2/m_1$ and provided that $k_{13} \sqrt{k_{22}}\lesssim 1$. 

\section{Examples}
\label{sec:implications}

\subsection{Can neutrino mixing arise from the charged lepton sector?}

As an example of applications of the above results, in this subsection we revisit the issue of whether the PMNS matrix can be dominated by the charged lepton contribution. The PMNS matrix $U$ is given by $U=U_e U^\dagger_\nu$, where $U_e$ and $U_\nu$ enter the diagonalisation of the charged lepton and neutrino mass matrices, $M_E = U^T_{e^c} M^\text{diag}_E U_e$, $M_\nu = U_\nu^T M^\text{diag}_\nu U_\nu$. Let us consider the possibility that $U_\nu$ is diagonal and all the mixing comes from the charged lepton sector, $U = U_e$ (up to phases that can be set to zero without loss of generality). 

We first observe that in such a case the last row of the charged lepton mass matrix $M_E$ is approximately determined by the PMNS matrix, as
\begin{equation}
\label{eq:line3a}
|M^E_{3i}| = |U_{3i}| m_\tau + \ord{m_\mu^2/m_\tau}~,
\end{equation}
where, experimentally, $|U_{3i}| = \ord{1}$.\footnote{In order to prove the previous equation, we first observe that $|U^e_{3i}| = |U_{3i}| = \ord{1}$ and $|U^{e^c}_{33}| = \ord{1}$ (because $|M^E_{33}| \sim m_\tau$) and therefore $|M^E_{3i}| = |U^{e^c}_{33} U^e_{3i}m_\tau| + \ord{m_\mu} \sim m_\tau$. The stability condition then implies $|M^E_{j3}| \lesssim m_\mu$ and $|U^{e^c}_{3j}| \lesssim m_\mu/m_\tau$, $j=1,2$ (since $|M^E_{j3}| = |U^{e^c}_{3j} U^e_{33}m_\tau| + \ord{m_\mu}$). Finally, unitarity implies $|U^{e^c}_{33}| = 1 - \ord{m_\mu/m_\tau}^2$ and $|U^{e^c}_{k3}| \lesssim m_\mu/m_\tau$, $k=1,2$. Therefore, $|M^E_{3i}| = |U^{e^c}_{33} U^e_{3i}m_\tau + U^{e^c}_{k3} U^e_{ki}m_k| = |U^{e^c}_{33} U^e_{3i}| m_\tau + \ord{m^2_\mu/m_\tau} = |U^e_{3i}| m_\tau + \ord{m^2_\mu/m_\tau}$. }

By using \eq{line3a} and the results for normal hierarchy from the global fit in ref.~\cite{Capozzi:2013csa} we then get the $1\sigma$ ranges 
\begin{equation}
\label{eq:line3b}
|M_E| \approx 
\begin{pmatrix}
\ldots & \ldots & \ldots \\[2mm]
\ldots & \ldots & \ldots \\[2mm]
 (\text{0.28--0.45}) \, m_\tau & (\text{0.50--0.62}) \, m_\tau & (\text{0.72-0.76}) \, m_\tau
\end{pmatrix},
\end{equation}
up to corrections suppressed by $(m_\mu/m_\tau)^2 \approx 0.003$. 

We now want to determine the constraints on the first and second lines that follow from the stability assumption. Using the characterisation of stable mass matrices in Section~\ref{sec:structure},
we find that we find that a lepton mass matrix $M_E$ in the form \eq{line3b} satisfies the stability assumption iff it is possible to find a $k$ such that 
\begin{equation}
\label{eq:MEfull}
|M_E| = 
\begin{pmatrix}
\lesssim m_e & \lesssim m_e & \lesssim k m_e \\[2mm]
\lesssim m_\mu/k & \lesssim m_\mu/k & \lesssim m_\mu \\[2mm]
\sim m_\tau & \sim m_\tau & \sim m_\tau
\end{pmatrix} , \quad \text{ with }
\quad
1\lesssim k \lesssim \frac{m_\mu}{m_e} .
\end{equation}

The above matrix can be diagonalised perturbatively with a series of $2\times 2$ unitary transformations, giving
\begin{equation}
\label{eq:UallE}
U = U_e = \Phi R_{12}(\theta'_{12},\phi') R_{23}(\theta^e_{23},\phi_3-\phi_2) R_{12}(\theta^e_{12},\phi_2-\phi_1),
\end{equation}
where $R_{ij}(\theta,\phi)$ denotes the $3\times 3$ unitary transformation consisting in the embedding of 
\begin{equation}
\label{eq:R12}
\begin{pmatrix}
\cos\theta & -\sin\theta e^{i\phi} \\
\sin\theta e^{-i\phi} & \cos\theta
\end{pmatrix}
\end{equation}
in the $ij$ block of the $3\times 3$ matrix; $R_{23}(\theta^e_{23},\phi_3-\phi_2)$ and $R_{12}(\theta^e_{12},\phi_2-\phi_1)$ are the rotations necessary to bring the third row of $M_E$ in diagonal form and are determined by that row, 
\begin{equation}
\label{eq:line3c}
M_E = 
\begin{pmatrix}
\ldots & \ldots & \ldots \\[2mm]
\ldots & \ldots & \ldots \\[2mm]
e^{i\phi_1} s^e_{12}s^e_{23} \, m_\tau   & e^{i\phi_2} c^e_{12}s^e_{23} \, m_\tau & e^{i\phi_3} c^e_{23} \, m_\tau
\end{pmatrix};
\end{equation}
$R_{12}(\theta'_{12},\phi')$ diagonalises the 12 block after the previous two rotations have been applied; $\Phi$ is a diagonal matrix of phases. The results above hold up to corrections of relative order $m^2_\mu/m^2_\tau$. \Eqs{line3b} and~(\ref{eq:MEfull}) give
\begin{equation}
\label{eq:estimate}
\tan\theta^e_{23} \sim \tan\theta^e_{12} \sim 1 \quad \text{and} \quad \tan\theta'_{12} \sim 1/k .
\end{equation}

\begin{figure}[ph!]
\begin{center}
%
\fbox{\footnotesize Normal Ordering} \\[0.2cm]
\vspace*{-0.2cm}
\hspace*{-0.65cm} 
\begin{minipage}{0.5\linewidth}
\begin{center}
	\hspace{0.75cm} \mbox{\footnotesize (a)} \\[0.5mm]
	\includegraphics[width=70mm]{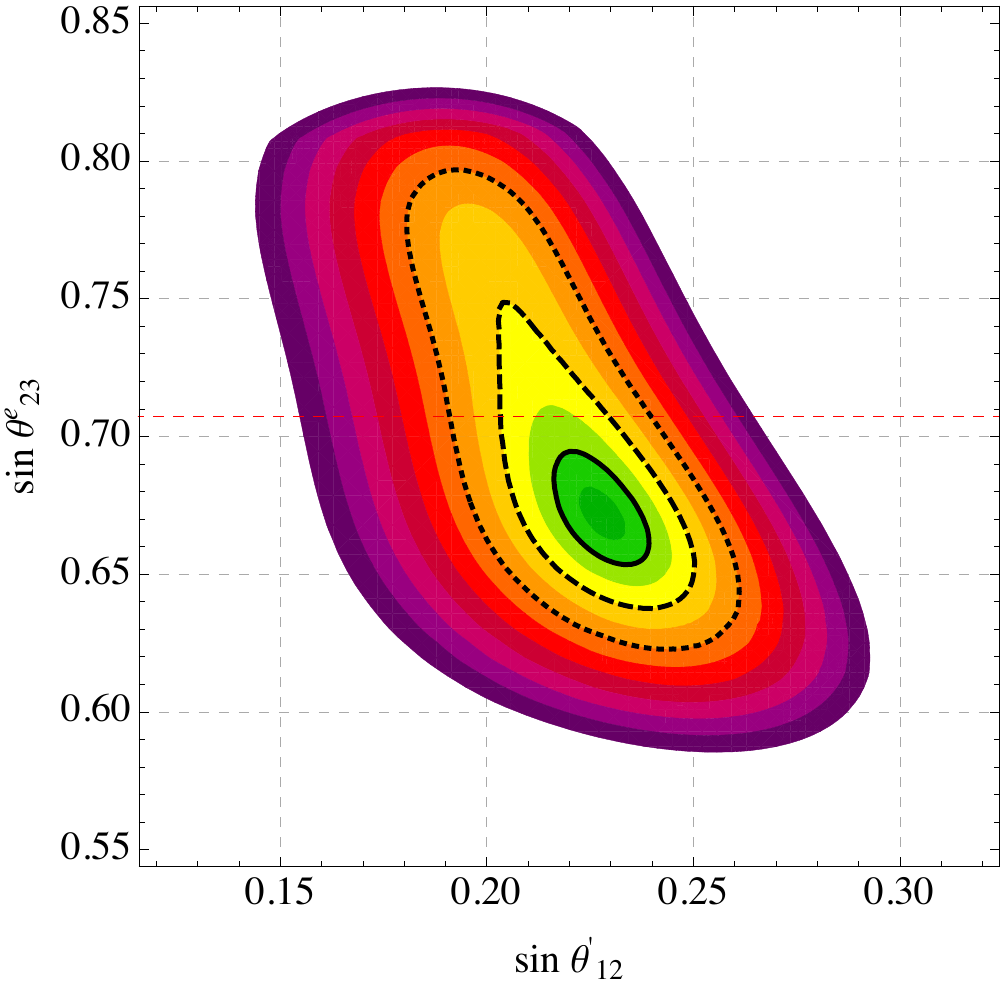}
\end{center}
\end{minipage}
%
%
\begin{minipage}{0.5\linewidth}
\begin{center}
	\hspace{0.5cm} \mbox{\footnotesize (b)} \\[0.5mm]
	\includegraphics[width=70mm]{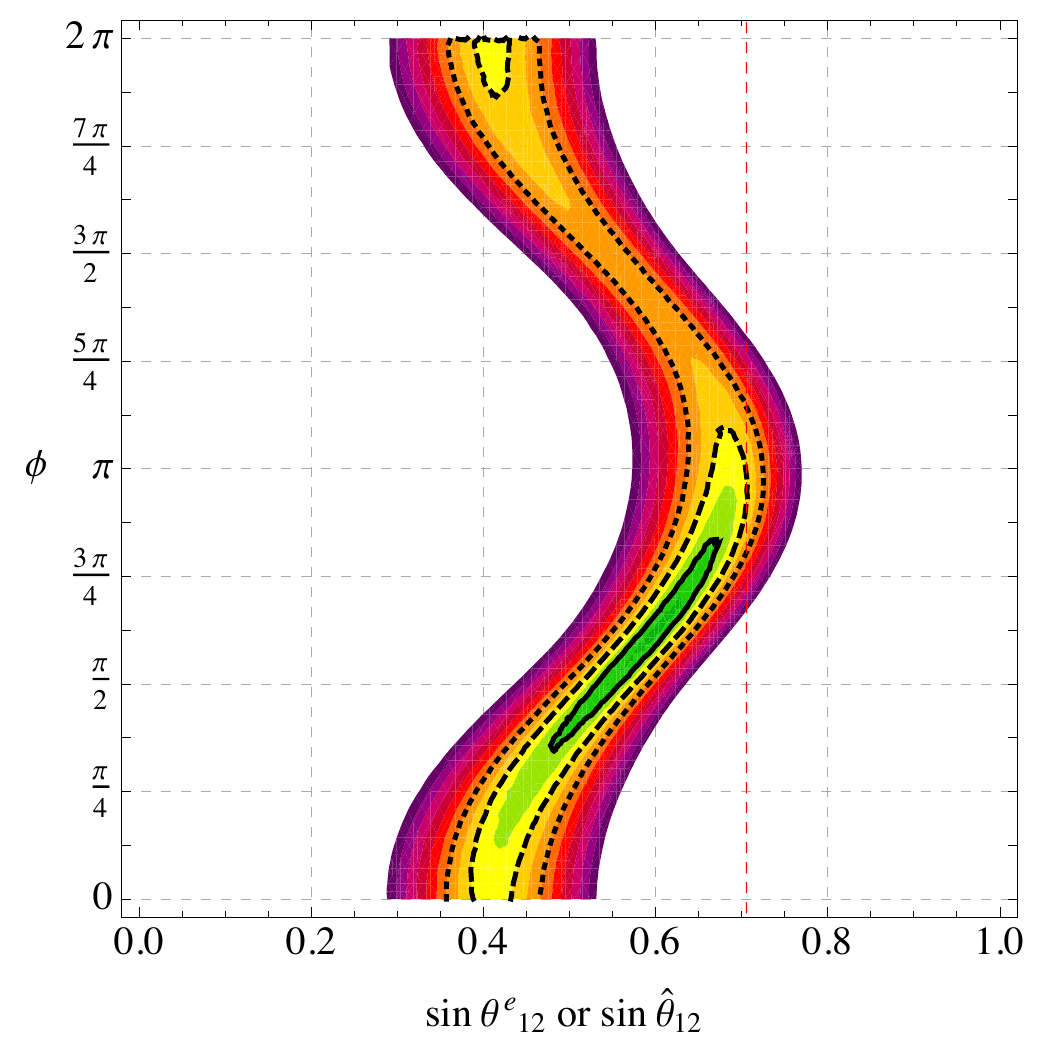}
	\end{center}
\end{minipage} 
\\
%
%
\fbox{\footnotesize Inverted Ordering} \\[0.2cm]
\vspace*{-0.2cm}
\hspace*{-0.65cm} 
\begin{minipage}{0.5\linewidth}
\begin{center}
	\hspace{0.75cm} \mbox{\footnotesize (c)} \\[0.5mm]
	\includegraphics[width=70mm]{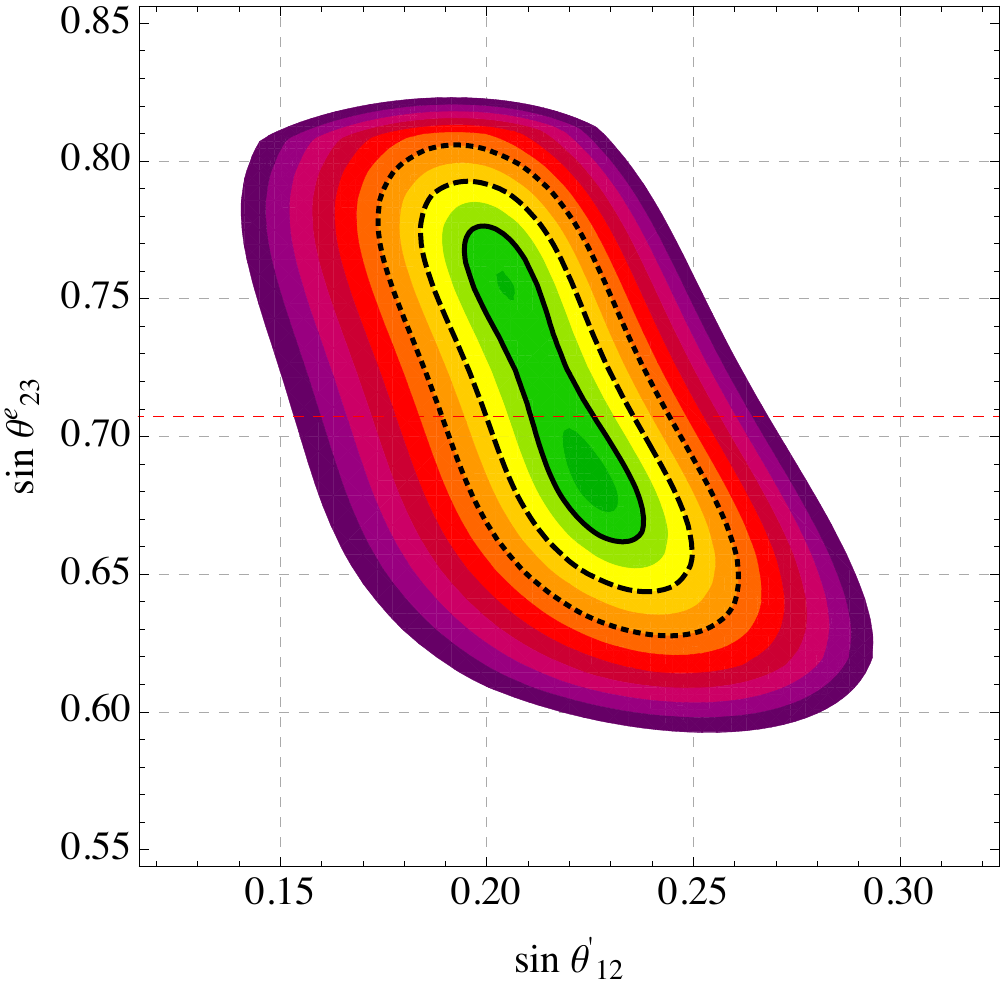}\
	\end{center}
\end{minipage}
%
%
\begin{minipage}{0.5\linewidth}
\begin{center}
	\hspace{0.5cm} \mbox{\footnotesize (d)} \\[0.5mm]
	\includegraphics[width=70mm]{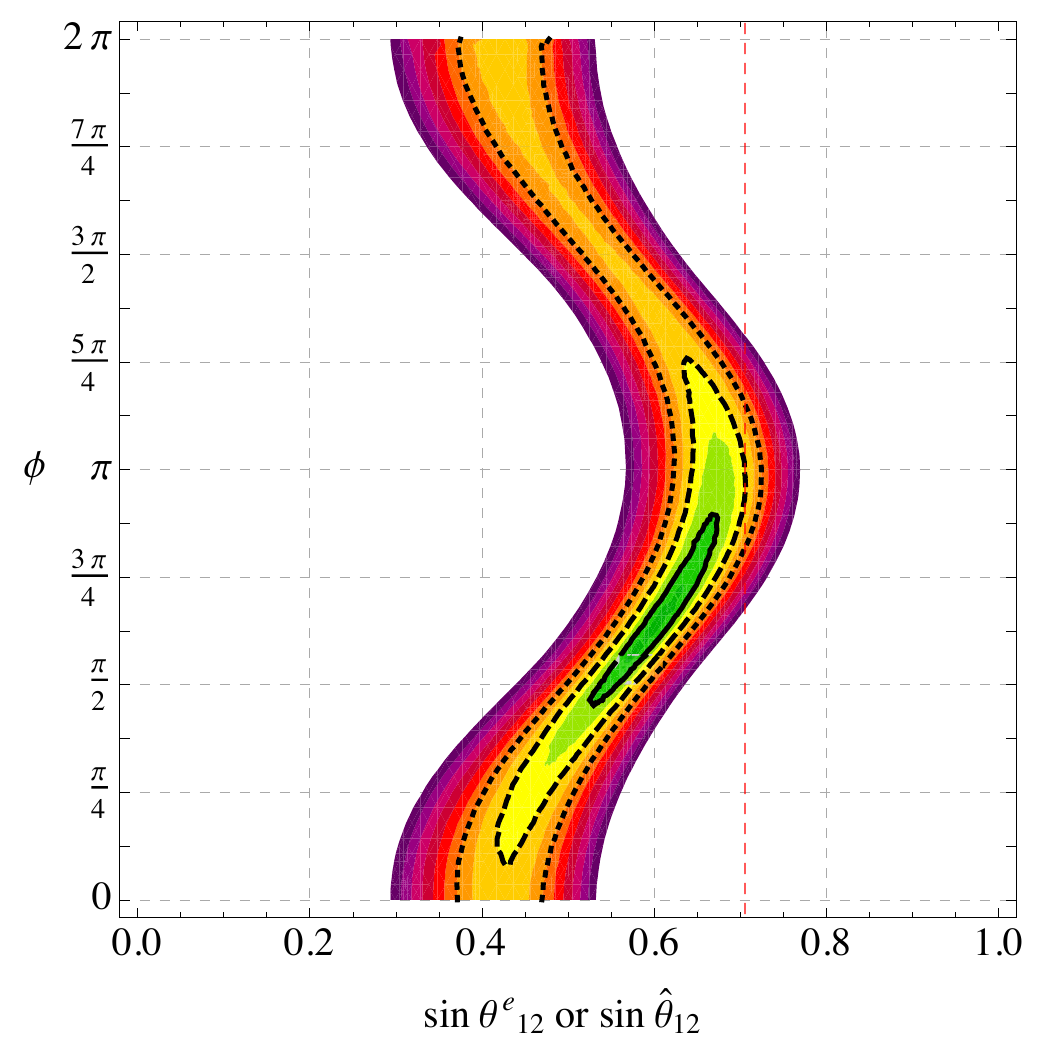}
	\end{center}
\end{minipage} 
\\
\vspace*{-0.2cm}
\begin{minipage}{0.5\linewidth}
\begin{center}
	\includegraphics[width=\textwidth]{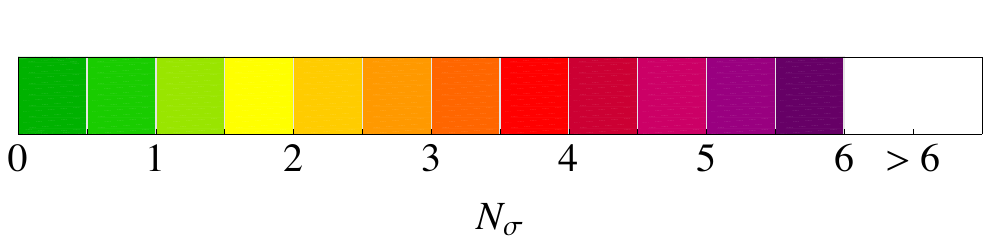}\\
\end{center}
\end{minipage}
\end{center}
\vspace*{-0.5cm}
\caption{\label{fig:fit_plot} 
\small Contours of $N_\sigma \equiv \sqrt{\Delta \chi^2}$ in the $(\sin \theta_{12}^\prime, \sin \theta^e_{23})$ (a,c) and $(\sin \theta_{12}^e, \phi)$ (b,d) planes. We construct the likelihood function using the results of the recent global fit of neutrino oscillation data from ref.~\cite{Capozzi:2013csa} for normal ordering (upper row) and inverted ordering (lower row) of neutrino masses. In plots (a,c) we use only the constraints on $\sin \theta_{13}$ and $\sin^2 \theta_{23}$ and the first two equations in eq.~\eqref{eq:parameterizations}. In plots (b,d) we include also the constraints on $\sin^2 \theta_{12}$ and $\delta$ and use the third line of eq.~\eqref{eq:parameterizations} as well as the relation between $\phi$ and $\delta$ obtained by comparing the expressions for $J_{CP}$ in the two parametrizations (see ref.~\cite{Marzocca:2013cr} for the details), and we marginalize over $\sin \theta_{12}^\prime$ and $\sin \theta^e_{23}$. The same analysis can be applied also to the case discussed in Section~\ref{sec:theta12pi4}, see eq.~\eqref{eq:parameterizationspi4}, by substituting $\theta_{12}^e$ with $\hat{\theta}_{12}$.
}
\end{figure}

The PMNS matrix in \eq{UallE} is in a form that has been already considered in the literature~\cite{Altarelli:2004jb,Romanino:2004ww,Petcov:2004rk,Marzocca:2011dh,Marzocca:2013cr}. The precise relation between the parameters in \eq{UallE} and the parameters of the standard parameterisation can be found in refs.~\cite{Marzocca:2011dh,Marzocca:2013cr}. In our notations, 
\begin{equation}
\label{eq:parameterizations}
 \begin{aligned}
	\sin \theta_{13} &= \sin \theta'_{12} \sin \theta^e_{23} = \ord{1} \sin\theta_{23}/k \\[2mm]
	\sin^2 \theta_{23} &= \sin^2 \theta^e_{23} \frac{\cos^2 \theta'_{12}}{1 - \sin^2 \theta'_{12} \sin^2 \theta^e_{23} } \\
	\sin^2 \theta_{12} &= \frac{\left| \sin \theta^e_{12} \cos \theta'_{12} + e^{i \phi} \cos \theta^e_{12} \cos \theta^e_{23} \sin \theta'_{12} \right|^2}{1 - \sin^2 \theta'_{12} \sin^2 \theta^e_{23} } ,
\end{aligned} 
\end{equation}
where $\phi = \phi' + \phi_1 - \phi_2$. A fit for the parameters $\theta^e_{23}$, $\theta^e_{12}$ and $\theta'_{12}$, $\phi$ is shown in \Fig{fit_plot}, using the results of the global fit of neutrino oscillation data from ref.~\cite{Capozzi:2013csa} both for normal and inverted ordering of neutrino masses. The $\ord{1}$ factor in the first equation is not expected to be small, unless a correlation among the entries of $M^E_{[23][12]}$ makes its determinant correspondingly small~\cite{Altarelli:2004jb}. If this is not the case, we estimate 
\begin{equation}
\label{eq:k}
1/k = \ord{1} \times 0.16.
\end{equation}
From Fig.~\ref{fig:fit_plot}(a,c) we also note also that, as a consequence of the first equation in~(\ref{eq:parameterizations}), the rotation angle $\theta'_{12}$ that diagonalises the 12 sector of $M_E$ has the same size, within errors, as the Cabibbo angle. Such a connection with the quark sector can be realised in the context of grand-unification \cite{Antusch:2011qg,Marzocca:2011dh,Antusch:2012fb,Antusch:2013kna}.

In the light of what above, we observe that:
\begin{itemize}
\item 
A small $\theta_{13}$ in the range
\begin{equation}
\label{eq:t13range}
0.03 \approx \frac{m_e}{m_\mu}\sin_{23} \lesssim \sin\theta_{13} \leq \sin_{23} \approx 0.7 ,
\end{equation}
including the measured range, can be obtained without the need of cancellations even if all neutrino mixing comes from the charged lepton sector.\footnote{In ref.~\cite{Altarelli:2004jb}, a small $\theta_{13}$ was associated to cancellations in the determinant of the $M^E_{[23][12]}$ submatrix, but it was also shown that the latter could be a natural consequence of a heavy vector-like lepton exchange dominance.}
\item
Independent of whether all neutrino mixing is accounted for by  the charged lepton contribution or not, the latter contribution is usually written as a product of two rotations in the ``standard order'' $U_e = R_{12} R_{23}$. We see that the ``inverted order'', $U_e = R_{23} R_{12}$, considered e.g.\ in refs.~\cite{Frampton:2004ud,Marzocca:2013cr}, can also be obtained (up to corrections of order $m_e/m_\mu$), without the need of correlations, when $1/k$ is at the lower end of its range, $1/k \sim m_e/m_\mu$. 
\item
The value of $k$ in \eq{k} is compatible with $k\sim \sqrt{m_\mu/m_e}$. Lepton mixing can therefore be accounted for in this set up by 
\begin{equation}
\label{eq:MEfull2}
|M_E| \sim 
\begin{pmatrix}
m_e &  m_e &  \sqrt{m_e m_\mu} \\[2mm]
\sqrt{m_e m_\mu} & \sqrt{m_e m_\mu} & m_\mu \\[2mm]
 m_\tau &  m_\tau &  m_\tau
\end{pmatrix} . 
\end{equation}
\end{itemize}

Finally, let us briefly discuss whether an abelian flavour model, for example, can account for the texture in \eq{MEfull2}. Often abelian models lead to textures in the form $M^E_{ij} \sim c_{ij} \lambda^c_i \lambda_j\,m_0$, with $0 < \lambda_i, \lambda^c_j < 1$ and $|c_{ij}| \sim 1$~\cite{Altarelli:2002hx,Altarelli:2004za}. Such textures can also be obtained in partial compositeness models (for a recent review see e.g. ref.~\cite{KerenZur:2012fr}). Clearly such textures can account for all the entries of the  above texture except for $M^E_{33}$, which parametrically would be expected to be $\ord{m_\tau \sqrt{m_\mu/m_e}}$ rather than $\ord{m_\tau}$, i.e.\ an order of magnitude larger. Still, a texture in the form $M^E_{ij} \sim c_{ij} \lambda^c_i \lambda_j\,m_0$ with $|M^E_{33}| = \ord{m_\tau \sqrt{m_\mu/m_e}}$ is not obviously ruled out. In fact, the parametric difference between the ratio $|M^E_{32}/M^E_{33}| \sim 0.07$ predicted by that texture and the ratio $|M^E_{32}/M^E_{33}| \sim 1$ in \eq{MEfull2} can be accounted by i) the fact that the precise observed value $|M^E_{32}/M^E_{33}| \approx 0.7$ is slightly smaller than 1, ii) the fact that in a two Higgs doublet model with large $\tan\beta$ the running of $|M^E_{32}/M^E_{33}|$ from a high scale to the electroweak scale can reduce its value by a factor 2 \cite{Antusch:2013jca}, and iii) a slightly stretched $\ord{1}$ factor. 

Another possibility is to consider an abelian flavour model with more than one flavon, which does not necessarily lead to a texture in the form $M^E_{ij} \sim c_{ij} \lambda^c_i \lambda_i\,m_0$. A complete example, also forcing the neutrino mass matrix to be diagonal, is provided in Appendix~\ref{sec:diagonalneutrinos}.

\subsection{Correction to $\theta_{12} = \pi/4$ from the charged lepton sector}
\label{sec:theta12pi4}

As a second example, let us consider the case in which the neutrino mass matrix contributes to lepton mixing with a maximal ``12'' rotation (up to phases),
\begin{equation}
\label{eq:nupi4}
U_\nu = \Phi_\nu R_{12}\fracwithdelims{(}{)}{\pi}{4} \Psi_\nu ,
\end{equation}
where $\Phi_\nu$ and $\Psi_\nu$ are diagonal matrices of phases. The charged lepton mass matrix must account in this case for the measured deviation of $\theta_{12}$ from $\pi/4$, besides for $\theta_{23}$ and $\theta_{13}$. 

As before we have $M_{3i} \approx m_\tau U^e_{3i}$, where now 
\begin{equation}
\label{eq:UUnu}
U_e = U U_\nu .
\end{equation}
We can still parameterize the last row of $M_E$ as in \eq{line3c}, with 
\begin{equation}
\label{eq:newte}
\begin{aligned}
s^e_{12} s^e_{23} \, e^{i\phi_1}  &= (\bar U_{31} \, e^{i\alpha_1} + \bar U_{32} \, e^{i\alpha_2})/\sqrt{2} \\
c^e_{12} s^e_{23} \, e^{i\phi_2} &= (-\bar U_{31} \, e^{i\alpha_1} + \bar U_{32} \, e^{i\alpha_2})e^{i\beta}/\sqrt{2} \\
c^e_{23} \, e^{i\phi_3}  &= \bar U_{33} \,e^{i\alpha_3} ,
\end{aligned}
\end{equation}
where we have denoted by $\bar U$ the PMNS matrix in the standard parameterization (the matrix 
$U$ in \eq{UUnu} is not necessarily in that parameterization). \Eqs{newte} show that the value of $\theta^e_{23}$ is still determined by the PMNS matrix to be in the $1\sigma$ range $0.72 < \cos\theta^e_{23} < 0.76$, while the value of $\theta^e_{12}$ also depends on the unknown phase $\alpha_1-\alpha_2$. A non zero value of $\theta^e_{12}$ is required in order to make $|U_{31}| \neq |U_{32}|$, as preferred by data at $2\sigma$ (see below). For the present central values of the PMNS parameters in ref.~\cite{Capozzi:2013csa} (normal hierarchy), one gets the lower bound $\tan\theta^e_{12} > 0.13$. While $\theta^e_{12}$ may be expected not to be far from this lower limit, large values are also allowed, provided that the relative phase $\alpha_1-\alpha_2$ in \eq{newte} is properly adjusted. 

In the light of what above, the texture for the third line of $M_E$ can be written as 
\begin{equation}
\label{eq:line3bpi4}
|M_E| \sim
\begin{pmatrix}
\ldots & \ldots & \ldots \\
\ldots & \ldots & \ldots \\
\epsilon\, m_\tau & m_\tau & m_\tau
\end{pmatrix},
\end{equation}
where $\epsilon = \tan\theta^e_{12}$ and indicatively we can consider the range $0.13 \lesssim \epsilon \lesssim 1$, with smaller values also allowed if PMNS parameters away from the best fit are considered (we will anyway assume that $\epsilon \gtrsim m_e/m_\mu \approx 0.005$, as indicated by present data). As the case $\epsilon = \ord{1}$ has been considered in the previous subsection, we are interested to the case in which $\epsilon$ is significantly smaller than one, but the discussion below holds in both cases. 

Let us now determine the constraints on the structure of the charged lepton mass matrix that follow from \eq{line3bpi4} and the stability assumption. Using the characterisation of stable mass matrices in Section~\ref{sec:structure}, we find that a lepton mass matrix $M_E$ in the form~(\ref{eq:line3bpi4}) satisfies the stability assumption iff it is possible to find a $k$ such that 
\begin{equation}
\label{eq:MEfullpi4}
|M_E| = 
\begin{pmatrix}
\lesssim m_e & \displaystyle \lesssim \frac{m_e}{\epsilon} \min(1,k\epsilon) & \lesssim m_e \, k \\[3mm]
\lesssim m_\mu/k & \displaystyle \lesssim\frac{m_\mu}{k\epsilon} \min(1,k\epsilon) & \sim m_\mu \\[4mm]
\sim\epsilon\, m_\tau & \sim m_\tau & \sim m_\tau
\end{pmatrix} , \quad \text{ with } \quad 1 \lesssim k \lesssim \frac{m_\mu}{m_e} . 
\end{equation}

We can now diagonalise the matrix in \eq{MEfullpi4} to obtain the charged lepton contribution to the PMNS matrix. A perturbative block by block diagonalisation gives as before 
\begin{equation}
\label{eq:UallEpi4}
U_e = \Phi R_{12}(\theta'_{12},\phi') R_{23}(\theta^e_{23},\phi_3-\phi_2) R_{12}(\theta^e_{12},\phi_2-\phi_1),
\end{equation}
where $\Phi$ is a diagonal matrix of phases, $R_{23}(\theta^e_{23},\phi_3-\phi_2)$ and $R_{12}(\theta^e_{12},\phi_2-\phi_1)$ are the rotations necessary to bring the third row of $M_E$ (parameterised as in \eq{line3c}) in diagonal form, $R_{12}(\theta'_{12},\phi')$ diagonalises the 12 block after the previous two rotations have been applied, and the result holds up to corrections of relative order $m^2_\mu/m^2_\tau$. \Eq{MEfullpi4} gives 
\begin{equation}
\label{eq:estimatepi4}
\tan\theta'_{12} \sim 1/k, \quad \tan\theta^e_{23} \sim 1, \quad \tan\theta^e_{12} =\epsilon .
\end{equation}

By combining $U_e$ in \eq{UallEpi4} with $U_\nu$ in \eq{nupi4} we find a PMNS matrix in the form
\begin{equation}
\label{eq:Upi4}
U = U_e U^\dagger_\nu = \Phi R_{12}(\theta'_{12},\phi') R_{23}(\theta^e_{23},\phi_3-\phi_2) R_{12}(\hat\theta_{12},\hat\phi_{12}) \Psi , 
\end{equation}
where $\Psi$ is a diagonal matrix of phases. The PMNS matrix is thus again in the form found in the previous subsection ($12\times 23\times 12$ rotations), but now the last 12 rotation $R_{12}(\theta^e_{12},\phi_2-\phi_1)$ is replaced by the combination of that rotation with the maximal 12 rotation provided by the neutrino sector
\begin{equation}
\label{eq:hatthetadef}
R_{12}(\hat\theta_{12},\hat\phi_{12}) = R_{12}(\theta^e_{12},\phi_2-\phi_1)R_{12}(\pi/4,\phi^\nu_{12})\times\text{phases},
\end{equation}
where $\phi^\nu_{12}$ is a combination of the phases in $\Phi_\nu$, $\Psi_\nu$. In the absence of phases, $\hat\theta_{12} = \pi/4\pm \theta^e_{12}$. In general, 
\begin{equation}
\label{eq:hattheta}
\frac{\pi}{4}-\theta^e_{12} \leq \hat\theta_{12} \leq \frac{\pi}{4} + \theta^e_{12} ,
\end{equation}
with $\hat\theta_{12}$, $\hat\phi_{12}$ given by 
\begin{equation}
\label{eq:hatthetaresult}
e^{i\hat\phi_{12}}\tan\hat\theta_{12} = e^{i\phi^\nu_{12}}
\frac{1+\tan\theta^e_{12}e^{i(\phi^e_{12}-\phi^\nu_{12})}}{1-\tan\theta^e_{12}e^{i(\phi^e_{12}-\phi^\nu_{12}})} .
\end{equation}

The PMNS matrix is again parameterised in the way considered e.g.\ in ref.~\cite{Marzocca:2013cr} in  terms of the angles $\theta'_{12}$, $\theta^e_{23}$ and $\hat\theta_{12}$ in \eq{Upi4} and of the phase $\phi = \phi' =\hat\phi_{12}$. The angles $\theta'_{12}$, $\theta^e_{23}$, $\hat\theta_{12}$ are related to the parameters of the charged lepton mass matrix in \eq{MEfullpi4} by 
\begin{equation}
\label{eq:estimatepi4bis}
\tan\theta' \sim 1/k, \quad \tan\theta^e_{23} \sim1, \quad
\pi/4-\epsilon \lesssim \tan\hat\theta_{12} \lesssim \pi/4 + \epsilon 
\end{equation}
and are related to the standard PMNS parameters by \eqs{parameterizations} with $\theta^e_{12} \to \hat\theta_{12}$ and $\phi = \phi'-\hat\phi_{12}$,
\begin{equation}
\label{eq:parameterizationspi4}
 \begin{aligned}
	\sin \theta_{13} &= \sin \theta'_{12} \sin \theta^e_{23} = \ord{1} \sin\theta_{23}/k \\[2mm]
	\sin^2 \theta_{23} &= \sin^2 \theta^e_{23} \frac{\cos^2 \theta'_{12}}{1 - \sin^2 \theta'_{12} \sin^2 \theta^e_{23} } \\
	\sin^2 \theta_{12} &= \frac{\left| \sin \hat\theta_{12} \cos \theta'_{12} + e^{i \phi} \cos \hat\theta_{12} \cos \theta^e_{23} \sin \theta'_{12} \right|^2}{1 - \sin^2 \theta'_{12} \sin^2 \theta^e_{23} } .
\end{aligned} 
\end{equation}
The determination of the PMNS parameters in \Figs{fit_plot} therefore still applies. In particular, the determination of $\theta^e_{23}$ and $\theta'_{12}$ is still given by \Fig{fit_plot}(a,c), while $\hat\theta_{12}$ and $\phi$ are determined by \Fig{fit_plot}(b,d). From \Fig{fit_plot}(b,d) we see that $\theta^e_{12} = 0$, corresponding to $\hat\theta_{12} = \pi/4$, is $2\sigma$ away from the best fit. Note also that the rotation $\theta'_{12}$ in the 12 sector of $M_E$ has again the same size as the Cabibbo angle. 

Note that two factors, both associated to the charged lepton sector, contribute to make $\theta_{12}$ different from the maximal value provided by the neutrino sector. One is the $\theta^e_{12}$ rotation induced by $M^E_{31}$, which makes $\hat\theta_{12} \neq \pi/4$, and the other is the $\theta'_{12}$ rotation used to diagonalise the 12 block of $M^E_{12}$ after the other two blocks have been diagonalised. It has been observed~\cite{Marzocca:2013cr} that in the absence of the $\theta^e_{12}$ contribution, i.e.\ when $\hat\theta_{12} = \pi/4$, the $\theta'_{12}$ rotation alone can account for the deviation of $\theta_{12}$ from $\pi/4$ only at the price of a $2\sigma$ tension (as $\theta'_{12}$ is constrained by $\theta_{13}$, see \eqs{parameterizationspi4}). Here we see that this tension disappears if the independent contribution $\theta^e_{12}$, induced by $M^E_{31}$, is taken into account. In such a scheme, $\theta'_{12}$ determines $\theta_{13}$ and $\theta^e_{12}$ further contributes to the deviation of $\theta_{12}$ from the neutrino contribution. Summarizing:
\begin{itemize}
\item 
A small $\theta_{13}$ in the range 
\begin{equation*}
0.03 \approx \frac{m_e}{m_\mu}\sin_{23} \lesssim \sin\theta_{13} \leq \sin_{23} \approx 0.7 ,
\end{equation*}
can be again induced without fine-tuning by the rotation $\theta'_{12}$, whose natural size is set by $1/k$. The experimental value of $\sin\theta_{13} = \ord{1}\sin\theta_{23}/k$ gives $1/k = \ord{1}\times 0.16$. 
\item
The previous rotation alone can account for the deviation of $\theta_{12}$ from $\pi/4$ only at the price of a $2\sigma$ tension, with present data. On the other hand, this tension disappears if the independent contribution to $\theta_{12}$ induced by a non-zero ratio $\epsilon = |M^E_{31}/M^E_{32}|$  is taken into account. Therefore, a plausible and stable texture for the charged lepton mass matrix can account at the same time for the atmospheric mixing angle, the $\theta_{13}$ angle, and the deviation of the $\theta_{12}$ angle from $\pi/4$. 
\end{itemize}

Finally, we comment on the possible origin of the texture in \eq{MEfullpi4}. We observe that the latter is compatible with a form $M^E_{ij} \sim c_{ij} \lambda^c_i \lambda_j\, m_0$, with $0 < \lambda_i, \lambda^c_j< 1$ and $|c_{ij}| \sim 1$, provided that $\epsilon \lesssim 1/k \sim 0.16$. Together with the experimental $2\sigma$ bound $\epsilon \gtrsim 0.13$, this implies $\epsilon \sim 1/k$. The structure $M^E_{ij} \sim c_{ij} \lambda^c_i \lambda_j\, m_0$ and the constraint $\det M_E = m_e m_\mu m_\tau$ then allow to rewrite \eq{MEfullpi4} as 
\begin{equation}
\label{eq:MEfullpi42}
|M_E| \sim 
\begin{pmatrix}
m_e & \displaystyle  \frac{m_e}{\epsilon} & \displaystyle \frac{m_e}{\epsilon} \\[3mm]
\epsilon\, m_\mu &  m_\mu &  m_\mu \\[4mm]
\epsilon\, m_\tau &  m_\tau &  m_\tau
\end{pmatrix} , \quad \epsilon \sim \text{0.13--0.16} . 
\end{equation}
The previous texture is indeed in the form $M^E_{ij} \sim  c_{ij} \lambda^c_i \lambda_i \, m_0$, with $(\lambda_1,\lambda_2,\lambda_3) \propto (\epsilon, 1, 1)$ and $(\lambda^c_1,\lambda^c_2,\lambda^c_3) \propto (m_e/\epsilon, m_\mu, m_\tau)$. It can also be written in the form $M^E_{ij} \sim c_{ij} \epsilon^{q^c_i+q_j}m_0$, with appropriate choice of $\epsilon$ and of the charges $q_i$, $q^c_i$. Explicit and complete flavour models will be considered elsewhere. 

\section{Summary}
\label{sec:conclusions}

We have studied general properties and specific examples of hierarchical fermion mass matrices satisfying a ``stability'' assumption. The latter amounts to assuming the stability of the smaller eigenvalues with respect to small perturbations of the matrix entries. Such an assumption is equivalent to the absence of certain precise correlations, be them accidental or forced by a dynamical/symmetry principle, among the matrix entries and is therefore also motivated by the fact that no evidence of special correlations has so far emerged from data. 

We have found a simple and general characterisation of a stable $3\times 3$ mass matrix $M$ with eigenvalues $m_i$, $i=1,2,3$, in terms of products of matrix entries that proves useful for practical applications, 
\begin{equation*}
\begin{gathered}
|M_{ih} M_{jk}| \lesssim m_2 m_3 \quad \text{for all $i\neq j$, $j\neq k$} \\
|M_{1i} M_{2j} M_{3k}| \lesssim m_1 m_2 m_3 \quad \text{for all $ijk$ permutations of $123$} .
\end{gathered}
\end{equation*}
A number of exact relations involving the minors of $M$ obtained in the appendices show that the latter corresponds to the absence of cancellations in the expressions entering the determinants and sub-determinants of $M$. 

As an example of application of the general results, we have revisited the issue of the the charged lepton contribution to neutrino mixing and determined the structure of the charged lepton mass matrix under two assumption for the neutrino contribution: i) no contribution at all (all mixing from the charged lepton sector) and ii) it only provides a maximal $\theta_{12}$ angle. 

In the first case, we have seen that lepton mixing can indeed all come from the charged lepton sector and that this does not need to fine-tune the value of $\theta_{13}$, as long as $\theta_{13} \gtrsim m_e/m_\mu \sin\theta_{23}  \approx 0.03$, as it turned out to be. We have also translated the present determination of the standard PMNS parameters into a determination of alternative, equivalent parameters, directly related to the charged lepton matrix entries. The latter determination also allows to determine with good accuracy the whole third row of the charged lepton mass matrix. We have also briefly discussed the possible origin of the textures we have considered. 

In the case in which the neutrino sector only provides a maximal 12 rotation, we have shown that present data provide a $2\sigma$ evidence for a non-vanishing $31$ entry of the charged lepton mass matrix. The PMNS matrix turns out in fact to be given by a product of 12 and 23 rotations, $U = 12_1 \times 23 \times 12_2 \times 12_{\pi/4}$, where the neutrino sector only provides for the last one. Both the first and the second 12 rotations contribute to shift $\theta_{12}$ from $\pi/4$. The first one is the rotation used to diagonalise the 12 block of $M^E$ after the other two blocks have been diagonalised and is directly related to $\theta_{13}$. The second one is induced by a non zero value of $M^E_{31}/M^E_{32}$. Sometimes only the first one is considered, with the second set to zero. In such a case, a $2\sigma$ tension arises between the value of the 12 rotation needed to account for $\theta_{13}$ and the value needed to account for the deviation from $\theta_{12} = \pi/4$ (also due to the constraints on the phase $\delta$). On the other hand, the tension disappears if the second 12 rotation is taken into account. In such a case, the first 12 rotation determines $\theta_{13}$ and the independent second rotation further contributes to the deviation of $\theta_{12}$ from $\pi/4$. This way, a plausible texture for the charged lepton mass matrix can account at the same time for the atmospheric mixing angle, the $\theta_{13}$ angle, and the deviation of the $\theta_{12}$ angle from $\pi/2$. 

In both cases, the left-handed rotation that diagonalises the 12 sector of $M_E$ has the same size, within errors, as the Cabibbo angle, which may be considered as a hint in support of grand-unification.

Finally, independent of whether all neutrino mixing is accounted for by  the charged lepton contribution or not, we have shown that the so-called ``inverted order'' of the 12 and 23 rotations in the charged lepton sector, $U_e = R_{23} R_{12}$ can also be obtained without fine-tuning (up to corrections of order $m_e/m_\mu$).

\section*{Acknowledgments}
We thank Ferruccio Feruglio, Michele Frigerio and Serguey Petcov for useful discussions. The work of A.R.\ was supported by the ERC Advanced Grant no. 267985 ``DaMESyFla'' and by the European Union FP7 ITN ``Invisibles'' (Marie Curie Actions, PITN- GA-201-289442). 

\appendix

\section{Useful results}
\label{sec:results}

In this Appendix, we collect some results that have been used in the main text and will be used in Appendix~\ref{sec:proofs}. 

Let us first define some notations. Below, $M$ will denote a $n\times n$ generic complex matrix, possibly representing a fermion mass matrix. The matrix $M$ can be diagonalized by using two independent unitary matrices, 
\begin{equation}
\label{eq:diagonalization}
M = V^T M_D U, \quad U,V \in U(n), \quad M_D = \diag(m_1,\ldots,m_n),
\end{equation}
where $m_1,\ldots,m_n \geq 0$ are uniquely defined  singular values of $M$ (referred in the text as eigenvalues), ordered according to their sizes, $m_1 \leq \ldots \leq m_n$. We denote by $M_{[i_1\ldots i_p][j_a \ldots j_q]}$ the $p\times q$ sub-matrix made of the elements in the rows $i_1\ldots i_p$ and columns $j_1\ldots j_p$ of $M$,
\begin{equation}
\label{eq:minors}
\left( M_{[i_1\ldots i_p][j_1 \ldots j_q]} \right)_{ab} \equiv M_{i_aj_b} 
\end{equation}
($p,q = 1\ldots n$, $a=1\ldots p$, $b = 1\ldots q$). If the rows and columns coincide, we also use the notation 
\begin{equation}
\label{eq:square}
M_{[i_1\ldots i_p]} \equiv M_{[i_1\ldots i_p][i_1\ldots i_p]}. 
\end{equation}

A first useful result is the fact that the determinant of any squared $p\times p$ submatrix of $M$ is bound by the $p$ largest singular values of $M$,
\begin{equation}
\label{eq:supdet}
\left|\det M_{[i_1\ldots i_p][j_1 \ldots j_p]}\right| \leq m_n \ldots m_{n-p+1} .
\end{equation}
In the case $p=n$, the inequality~\eq{supdet} becomes of course an equality. For $p=1$, \eq{supdet} shows that all matrix elements are bound by the largest eigenvalue, $|M_{ij}| \leq m_n$. These inequalities are complementary to the ones in \eq{identity} below. 

\Eq{supdet} follows from a known result of linear algebra stating that the singular values $\hat m_1 \leq \ldots \leq \hat m_p$ of the $p\times p$ submatrix $M_{[i_1\ldots i_p][j_1 \ldots j_p]}$ are bound by the $p$ largest singular values of $M$, $\hat m_i \leq m_{n-p+i}$, $i=1\ldots p$, see e.g. ref.~\cite{R.A.Horn_and_C.R.Johnson:1991}.%
\footnote{It can also be obtained as follows. If two out of $i_1\ldots i_p$ are equal, \eq{supdet} is trivially verified. If $i_1\ldots i_p$ are all different, $\left|\det M_{[i_1\ldots i_p][j_1 \ldots j_p]}\right|^2 \leq \sum_{k_1<\ldots <k_p} \left|\det M_{[i_1\ldots i_p][k_1 \ldots k_p]}\right|^2 = \det [(M^\dagger M)_{[i_1\ldots i_p]}] 
= \det (U^\dagger M^2_D U)_{[i_1\ldots i_p]}  
= \sum_{k_1 < \ldots < k_p} |\det U_{[k_1\ldots k_p][i_1\ldots i_p]}|^2m^2_{k_1}\ldots m^2_{k_p} \leq m^2_{n} \ldots m^2_{n-p+1} \sum_{k_1 < \ldots < k_p} |\det U_{[k_1\ldots k_p][i_1\ldots i_p]}|^2 
= m^2_{n} \ldots m^2_{n-p+1} |\det (U^\dagger U)_{[i_1\ldots i_p]}|^2
= m^2_{n} \ldots m^2_{n-p+1}$. }
 
A related but independent result allows to obtain combinations of $p$ singular values through the determinant of $p\times p$ submatrices:
\begin{equation}
\label{eq:identity}
\Pi^2_p = \sum_{i_1 < \ldots < i_p} m^2_{i_1} \ldots m^2_{i_p} = \sum_{\substack{ h_1 < \ldots < h_p \\  k_1 < \ldots < k_p }}
\left|
\det M_{[h_1\ldots h_p][k_1\ldots k_p]}
\right|^2.
\end{equation}
The relation above generalizes the $p=1$ result $\sum_{i=1}^n m^2_i = \sum_{i,j=1}^n|M_{ij}|^2$ obtained in ref.~\cite{Frigerio:2002rd}. For $p=n$ it reduces to $m^2_1\ldots m^2_n = |\det M|^2$. The general case follows from equating the coefficients of $\lambda^{n-p}$ in the secular equation $\det(\lambda \mathbf{1} - M^\dagger M) = \prod_{i=1}^n (\lambda - m^2_i)$. The result is particularly useful in the case of hierarchical singular values $m^2_1 \ll \ldots \ll m^2_n$, in which case $\sum_{i_1 < \ldots < i_p} m^2_{i_1} \ldots m^2_{i_p} \approx m^2_n\ldots m^2_{n-p+1}$ and  \eq{identity} becomes an expression for the product of the $p$ largest squared singular values of $M$.

\section{Proofs of the results in Section~\ref{sec:stability}}
\label{sec:proofs}

We now prove the results stated in Section~\ref{sec:stability}, starting from Proposition~\ref{p3}, whose discussion is preparatory to the proof of the other two. In the following, and in the main text, $x\lesssim y$ ($x\gtrsim y$) indicates that $x < y$ ($x > y$) or $x$ is of the same order of $y$, i.e.\ they differ by a factor of order one. Therefore, $x\lesssim y$  ($x\gtrsim y$) is  equivalent to the negation of $x \gg y$ ($x\ll y$). Moreover, $a \lessapprox b$ ($a \gtrapprox b$) indicates that $a < b +\epsilon$ ($a \gtrapprox b - \epsilon$), with $0<\epsilon\ll |b|$. 

\subsection{Proof of Proposition~\ref{p3}}

For convenience, we remind that the proposition states that the following three statements are equivalent:
\begin{enumerate}
\item \Eq{assumption} holds for given $p,i,j$;
\item $\hat \Pi_{(ij)p} \lesssim \Pi_p$;
\item $\check \Pi_{(ij)p} \lesssim \Pi_p$. 
\end{enumerate}
We also remind that the quantities $\hat \Pi_{(ij)p}$ and $\check \Pi_{(ij)p}$ are associated to the mass matrices
\begin{gather}
\label{eq:hat}
\hat M_{(ij)} = 
\begin{pmatrix}
M_{11} & \cdots & 0 & \cdots & M_{1n} \\
\cdots & \cdots & 0 & \cdots & \cdots \\
M_{i-1,1} & \cdots & 0 & \cdots & M_{i-1,n} \\
0 & 0 & M_{ij} & 0 & 0 \\
M_{i+1,1} & \cdots & 0 & \cdots & M_{i+1,n} \\
\cdots & \cdots & 0 & \cdots & \cdots \\
M_{n1} & \cdots & 0 & \cdots & M_{nn}
\end{pmatrix}, \;
\check M_{(ij)} =
\begin{pmatrix}
M_{11} & \cdots & M_{1j} & \cdots & M_{1n} \\
\cdots & \cdots & \cdots & \cdots & \cdots \\
M_{i1} & \cdots & 0 & \cdots & M_{in} \\
\cdots & \cdots & \cdots & \cdots & \cdots \\
M_{n1} & \cdots & M_{nj} & \cdots & M_{nn}
\end{pmatrix}. 
\end{gather}
As the quantities $\Pi_p$ can be profitably calculated in terms of the determinant of sub-matrices (\eq{identity}), let us first determine the relation among the sub-determinants of $M$, $\hat M_{(ij)}$, $\check M_{(ij)}$. The relation depends on whether the sub-matrix includes the row $i$ and the column $j$. Accordingly, we have (for convenience, we fix $i,j$ and drop the suffix $(ij)$ in $\hat M$, $\check M$, $\hat\Pi_p$, $\check\Pi_p$) 
\begin{equation}
\label{eq:subdet}
\begin{aligned}
\det \hat M_{[ii_1\ldots i_{p-1}][jj_1\ldots j_{p-1}]} &= M_{ij} \,\det M_{[i_1\ldots i_{p-1}][j_1\ldots j_{p-1}]} \\
\det \hat M_{[ii_1\ldots i_{p-1}][j_1\ldots j_{p}]} &= 0 \\
\det\hat M_{[i_1\ldots i_{p}][jj_1\ldots j_{p-1}]} &= 0 \\
\det \hat M_{[i_1\ldots i_{p}][j_1\ldots j_{p}]} &= \det M_{[i_1\ldots i_{p}][j_1\ldots j_{p}]} \\[1mm]
\det \check M_{[ii_1\ldots i_{p-1}][jj_1\ldots j_{p-1}]} &= \det  M_{[ii_1\ldots i_{p-1}][jj_1\ldots j_{p-1}]} - M_{ij} \,\det M_{[i_1\ldots i_{p-1}][j_1\ldots j_{p-1}]} \\
\det \check M_{[ii_1\ldots i_{p-1}][j_1\ldots j_{p}]} &= \det M_{[ii_1\ldots i_{p-1}][j_1\ldots j_{p}]}  \\
\det \check M_{[i_1\ldots i_{p}][jj_1\ldots j_{p-1}]} &= \det M_{[i_1\ldots i_{p}][jj_1\ldots j_{p-1}]}  \\
\det \check M_{[i_1\ldots i_{p}][j_1\ldots j_{p}]} &= \det M_{[i_1\ldots i_{p}][j_1\ldots j_{p}]} .
\end{aligned}
\end{equation}
In the above equations, all $i_1\ldots i_p$ are different from $i$ and all $j_1\ldots j_p$ different from $j$. 

Let us begin proving that $2 \Rightarrow 1$. Using \eq{identity} one finds 
\begin{equation}
\label{eq:basic}
|M_{ij}| \frac{\Delta \Pi^2_p}{|\Delta M_{ij}|} = \sum_{\alpha\in I^p_{ij}}
\left(
e^{i\theta} v^*_\alpha w_\alpha + e^{-i\theta} v_\alpha w^*_\alpha + \fracwithdelims{|}{|}{\Delta M_{ij}}{M_{ij}} |w_\alpha|^2 
\right),
\end{equation}
where
\begin{equation}
\begin{gathered}
\label{eq:proof2}
e^{i\theta} = \frac{\Delta M_{ij}/M_{ij}}{|\Delta M_{ij}/M_{ij}|}, \quad 
I^{p}_{ij} = \left\{(i_1\ldots i_{p-1},j_1\ldots j_{p-1}) : 
\begin{matrix}
1 \leq i_1 <\ldots < i_{p-1} \leq n, \text{ all } \neq i  \\ 1 \leq j_1 <\ldots < j_{p-1} \leq n,  \text{ all } \neq j 
\end{matrix}
\right\} , \\[1mm]
v_{(i_1\ldots i_{p-1},j_1\ldots j_{p-1})} = \det M_{[i,i_1\ldots i_{p-1}][j,j_1\ldots j_{p-1}]} , \\
w_{(i_1\ldots i_{p-1},j_1\ldots j_{p-1})} = M_{ij} \,\det M_{[i_1\ldots i_{p-1}][j_1\ldots j_{p-1}]} = \det \hat M_{[i,i_1\ldots i_{p-1}][j,j_1\ldots j_{p-1}]} .
\end{gathered}
\end{equation}
For $p=1$, \eq{basic} should be interpreted as 
\begin{equation}
\label{eq:basic1}
|M_{ij}| \frac{\Delta \Pi^2_1}{|\Delta M_{ij}|} = 2 \cos\theta |M_{ij}|^2 + |\Delta M_{ij} M_{ij}|.
\end{equation}
Now, 
\begin{align}
\label{eq:proof3}
\sum_{\alpha\in I^{p}_{ij}} |v_\alpha|^2 & \leq \sum_{\substack{ i_1 < \ldots < i_p \\ j_1 < \ldots < j_p }}
|\det M_{[i_1\ldots i_p][j_1\ldots j_p]}|^2 = \Pi^2_p 
\quad \text{and} \quad \\
\sum_{\alpha\in I^{p}_{ij}} |w_\alpha|^2 & \leq \sum_{\substack{ i_1 < \ldots < i_p \\ j_1 < \ldots < j_p }}
|\det \hat M_{[i_1\ldots i_p][j_1\ldots j_p]}|^2 = \hat \Pi^2_p \lesssim \Pi^2_p ,
\end{align}
the last approximate inequality being the hypothesis. Because of the Cauchy-Schwartz inequality, we also have $|\sum_\alpha v^*_\alpha w_\alpha| 
\lesssim \Pi^2_p$. All in all, we have proven point 1, as 
\begin{equation}
\label{eq:cvd3}
\left|
\frac{\Delta \Pi_p}{\Delta M_{ij}} 
\frac{M_{ij}}{\Pi_p} 
\right| \approx 
\frac{1}{2} \left |\frac{\Delta \Pi^2_p}{\Delta M_{ij}} 
\frac{M_{ij}}{\Pi^2_p} \right| \leq
\frac{|\sum_{\alpha} v^*_\alpha w_\alpha |}{\Pi^2_p} + \frac{1}{2} \fracwithdelims{|}{|}{\Delta M_{ij}}{M_{ij}} \frac{\sum_{\alpha} |w_\alpha|^2}{\Pi^2_p} \lesssim 1
\end{equation}
for all $\Delta M_{ij}$ with $|\Delta M_{ij}| \ll |M_{ij}|$ (note that in the first step in \eq{cvd3} we have neglected a term of the same order of the sub-leading second term in the RHS). 

Let us now prove, by contradiction, that $1\Rightarrow 3$. Suppose that $\check\Pi_p \lesssim \Pi_p$ was not verified. Then we would have $\check\Pi_p = R \,\Pi_p$, with $R \gg 1$. The large size of $\check \Pi_p$ would then imply a large size of $\sum_{\alpha} |w_\alpha|^2$, as 
\begin{align}
\label{eq:size}
&\sum_{\alpha\in I^p_{ij}} |w_\alpha|^2 = 
\sum_{\substack{ i_1 < \ldots < i_{p-1}, \neq i \\ j_1 < \ldots < j_{p-1}, \neq j }}
|M_{ij} \det M_{[i_1\ldots i_{p-1}][j_i\ldots j_{p-1}]}|^2 
\geq \\
&\hfill \Big|
\sum_{\substack{i_1 < \ldots < i_{p-1}, \neq i \\ j_1 < \ldots < j_{p-1}, \neq j}}
\left(
|\det M_{[ii_1\ldots i_{p-1}][jj_i\ldots j_{p-1}]}|^2 - |\det \check M_{[ii_1\ldots i_{p-1}][jj_i\ldots j_{p-1}]}|^2
\right) \Big| 
= 
|\check\Pi^2_p -  \Pi^2_p| \approx R^2 \Pi^2_p  , \notag
\end{align}
where we have used \eqs{subdet}. 
Consider now a variation of $M_{ij}$ by 
\begin{equation}
\label{eq:variation}
\Delta M_{ij} = \frac{k}{R}M_{ij} e^{i\phi} ,
\end{equation}
where $k$ is a positive number of order one and $\phi$ is a phase chosen in such a way that $2 \re[e^{i\theta} v^*_\alpha w_\alpha] = 0$ in \eq{basic}. Then $|\Delta M_{ij}| \ll |M_{ij}|$, but \eq{basic} gives
\begin{equation}
\label{eq:basic2}
\left| \frac{M_{ij}}{\Pi^2_p} \frac{\Delta \Pi^2_p}{\Delta M_{ij}} \right| = 
\frac{k}{R}
\frac{\sum_{\alpha}
|w_\alpha|^2}{\Pi^2_p} \gtrsim k R \gg 1,
\end{equation}
which would contradict the assumption. 

Let us finally prove that $3\Rightarrow 2$. This can be done by observing that $\check \Pi_p \lesssim \Pi_p$ implies 
\begin{multline}
\label{eq:32}
\hat \Pi^2_p = 
\sum_{\substack{ i_1 < \ldots < i_{p}, \neq i \\ j_1 < \ldots < j_{p}, \neq j }}
|\det M_{[i_1\ldots i_{p}][j_i\ldots j_{p}]}|^2 + \\
\sum_{\substack{ i_1 < \ldots < i_{p-1}, \neq i \\ j_1 < \ldots < j_{p-1}, \neq j }}
|\det M_{[ii_1\ldots i_{p-1}][jj_i\ldots j_{p-1}]} - \det \check M_{[ii_1\ldots i_{p-1}][jj_i\ldots j_{p-1}]}|^2 \leq \\
\sum_{\substack{ i_1 < \ldots < i_{p}, \neq i \\ j_1 < \ldots < j_{p}, \neq j }}
|\det M_{[i_1\ldots i_{p}][j_i\ldots j_{p}]}|^2 + 
\sum_{\substack{ i_1 < \ldots < i_{p-1}, \neq i \\ j_1 < \ldots < j_{p-1}, \neq j }}
|\det M_{[ii_1\ldots i_{p-1}][jj_i\ldots j_{p-1}]}|^2 + \\
\sum_{\substack{ i_1 < \ldots < i_{p-1}, \neq i \\ j_1 < \ldots < j_{p-1}, \neq j }}
|\det \check M_{[ii_1\ldots i_{p-1}][jj_i\ldots j_{p-1}]}|^2 \leq
\Pi^2_p + \check \Pi^2_p \sim \Pi^2_p ,
\end{multline}
where we have used \eqs{subdet} to obtain the first equality. This proves point 2. 

\subsection{Proof of Proposition~\ref{p2}}

For convenience, we remind that this proposition characterises as follows the stability of matrices $M$ with dimension $n\leq 3$: 
\begin{enumerate}
\item
For $n=1$, $M$ is always stable; 
\item
For $n=2$, $M$ is stable if and only if
\begin{equation}
\label{eq:2x2A}
|M_{11} M_{22}| \lesssim m_1 m_2,
\qquad
|M_{12} M_{21}| \lesssim m_1 m_2 ;
\end{equation}
\item
For $n=3$, $M$ is stable if and only if
\globallabel{eq:3x3A}
\begin{gather}
|M_{ih} M_{jk}| \lesssim m_2 m_3 \quad \text{for all $i\neq j$, $j\neq k$} \mytag \\
|M_{1i} M_{2j} M_{3k}| \lesssim m_1 m_2 m_3 \quad \text{for all $ijk$ permutations of $123$} .\mytag 
\end{gather}
\end{enumerate}

Let us start observing that for $p=1$ (any $n$) \eq{basic1} gives
\begin{equation}
\label{eq:p1}
\left| \frac{M_{ij}}{\Pi_1} \frac{\Delta \Pi_1}{\Delta M_{ij}} \right| \approx \cos\theta \frac{|M_{ij}|^2}{\sum_{hk}|M_{hk}|^2} \leq 1. 
\end{equation}
This proves in particular that $M$ is always stable for $n=1$. 

Given what above, for $n=2$ we just need to consider the case $p=2$. In general, for $p=n$, \eq{basic} gives 
\begin{equation}
\label{eq:pn}
\left| \frac{M_{ij}}{\Pi_n} \frac{\Delta \Pi_n}{\Delta M_{ij}} \right| \approx
\left| \re\left[
e^{i\theta} \frac{M_{ij}\,\text{cof}\, M_{ij}}{\det M} 
\right] +
\bigg|
\frac{M_{ij}\,\text{cof}\, M_{ij}}{\det M} 
\right|^2 \fracwithdelims{|}{|}{\Delta M_{ij}}{M_{ij}} \bigg|.
\end{equation}
Therefore, $\text{LHS} \lesssim 1$ in the previous equation for all $\Delta M_{ij}$ (i.e.\ for all $\theta$)  if and only if 
\begin{equation}
\label{eq:pn2}
|M_{ij} \,\text{cof}\, M_{ij}| \lesssim m_1 \ldots m_n
\qquad \text{(or $M_{ij}M^{-1}_{ji} \lesssim 1$)}. 
\end{equation}
In the $p=n=2$ case the above relations coincide with the ones in \eq{2x2A}, which proves the case $n=2$. 

The proof of the $n=3$ case is more involved. First of all, let us show that the stability or $\Pi_2$ with respect to variation of any matrix element is equivalent to $|M_{ih} M_{jk}| \lesssim m_2 m_3$ for all $i\neq j$, $h\neq k$. It is easy to show that the stability of $\Pi_2$ implies the latter relations: proposition~\ref{p3} states that the stability of $\Pi_2$ implies $\hat \Pi_{(ij)2} \lesssim \Pi_2$; then $|M_{ih} M_{jk}| = |\det\hat M^{(ij)}_{[ij][hk]}| \leq \hat \Pi_{(ij)2} \lesssim \Pi_2\simeq m_2 m_3$. Viceversa, if $|M_{ih} M_{jk}| \lesssim m_2 m_3$ for all $i\neq j$, $h\neq k$, we have, using \eq{basic} as before,
\begin{multline}
\label{eq:n2}
\left| \frac{M_{ih}}{\Pi_2} \frac{\Delta \Pi_2}{\Delta M_{ih}} \right| \lesssim 
\left| \frac{\sum_{hk} (\det M_{[ij][hk]})^* M_{ih} M_{jk} }{m^2_2m^2_3} \right| \\ \lesssim 
\frac{\sum_{hk} |\det M_{[ij][hk]}| }{m_2m_3} 
\leq 
2\frac{(\sum_{hk} |\det M_{[ij][hk]}|^2)^{1/2}}{m_2m_3} \leq 2\frac{\Pi_2}{m_2m_3} \simeq 2, 
\end{multline}
which proves that $\Pi_2$ is stable.\footnote{In \eq{n2} we have used $\sum_{i=1}^k |x_i| \leq \sqrt{k} (\sum_{i=1}^k |x_i|^2 )^{1/2}$ to the sum of the 4 terms in $\sum_{hk} |\det M_{[ij][hk]}|$.} 

In order to complete the proof of the $n=3$ case, we now show that the stability of $\Pi_3$ is equivalent to $|M_{1i} M_{2j} M_{3k}| \lesssim m_1 m_2 m_3$ for  all $ijk$ permutations of $123$.  First, using again \eq{basic}, we find that the stability of $\Pi_3$ is equivalent to 
\begin{equation}
\label{eq:cof}
|M_{ij}\,\text{cof}\, M_{ij}| \lesssim m_1m_2m_3  \quad \text{for all $ij$} .
\end{equation}
We then have to show that \eq{cof} is equivalent to eq.~(\ref{eq:3x3A}b). It is easy to show that eq.~(\ref{eq:3x3A}b) implies \eq{cof}. In order to show that \eq{cof} implies eq.~(\ref{eq:3x3A}b), let us first observe that there must exist at least one $2\times 2$ sub-matrix $M_{[ij][hk]}$ with determinant $|\det M_{[ij][hk]}| = \ord{m_2m_3}$. Otherwise, if $|\det M_{[ij][hk]}| \ll m_2 m_3$ for all sub-matrices, we would also have $\Pi^2_2 = \sum_{i<j,h<k} |\det M_{[ij][hk]}|^2 \ll m^2_2 m^2_3$.\footnote{More precisely, we can show that there is at least one sub-matrix $M_{[ij][hk]}$ such that $\det M_{[ij][hk]}\geq m_2 m_3/2$. In order to show it, we anticipate that there can be at most 4 sub-determinants giving a sizeable contribution to $\Pi_2$ (see Appendix~\ref{sec:ordering}). Then for at least one of the 4 sizeable sub-determinants we must have $|M_{[ij][hk]}|^2 \geq \Pi^2_2/4 \approx m^2_2 m^2_3 /4$, i.e. $|\det M_{[ij][hk]}|\geq m_2 m_3/2$.} Without loss of generality, we can assume such sub-matrix to be $M_{[23]}$. Then \eq{cof} for $ij=11$ forces $|M_{11}| \lesssim m_1$. Since we also have $|M_{22}M_{33}| \lesssim m_2 m_3$, we conclude that 
\begin{equation}
\label{eq:123}
|M_{11}M_{22}M_{33}| \lesssim m_1 m_2 m_3.
\end{equation}
We have therefore proven one of the relations in eq.~(\ref{eq:3x3A}b). All the other ones follow because of the constraints \eq{cof}. For example, using \eq{cof} for $ij=33$, $|M_{11}M_{22}-M_{12}M_{21}| |M_{33}| \lesssim m_1 m_2 m_3$, 
we obtain
\begin{equation}
\label{eq:123bis}
|M_{12}M_{21}M_{33}| \lesssim m_1 m_2 m_3.
\end{equation}
Using \eq{cof} for $ij=12$, we obtain $|M_{12}M_{23}M_{31}| \lesssim m_1 m_2 m_3$. And so on and so forth (the 9 constraints in \eq{cof} are enough to constrain all the 6 products in eq.~(\ref{eq:3x3A}b). This completes the proof of the $n=3$ case and thus of Proposition~\ref{p2}. 

\subsection{Proposition~\ref{p2} cannot be extended to $n=4$}

As mentioned in the text, the characterisation in Proposition~\ref{p2} cannot be extended to the case $n=4$. For example, not all $n=4$ hierarchical matrices satisfying the stability assumption satisfy $|M_{1i}M_{2j}M_{3k}M_{4l}| \lesssim m_1m_2m_3m_4$ for all $ijkl$ permutations of $1234$. This is the case for example of the matrix in \eq{example2}. 
\begin{example} Consider the matrix
\label{ex:2}
\begin{equation}
\label{eq:example2}
M = \begin{pmatrix}
0 & \epsilon' & \epsilon' & \epsilon' \\
\epsilon' & 0 & 0 & 1\\
\epsilon' & \epsilon' & 0 & 1 \\
0 & \epsilon & \epsilon & 1
\end{pmatrix} ,
\end{equation}
where $\epsilon' \ll \epsilon \ll 1$. The singular values are approximately given by $\epsilon'(\epsilon'/\epsilon)$, $\epsilon'/2$, $\sqrt{4/3}\, \epsilon$, $\sqrt{3}$. The matrix satisfies the stability assumption. However, $M_{13}M_{24}M_{31}M_{42} \approx (\epsilon/\epsilon')\, m_1 m_2 m_3 m_4 \gg m_1 m_2 m_3 m_4$.
\end{example}

\subsection{Proof of Proposition~\ref{p1}}

Before illustrating the proof, let us define more precisely the quantity on the LHS of \eq{assumption2}. If the limit of the LHS of \eq{assumption} for $\Delta M_{ij} \to 0$ existed, we would simply have  
\begin{equation}
\label{eq:limit}
\left|\frac{\partial \Pi_p}{\partial M_{ij}} 
\frac{M_{ij}}{\Pi_p} 
\right|  = \lim_{\Delta M_{ij}\to 0} 
\left| \frac{\Delta \Pi_p}{\Delta M_{ij}} 
\frac{M_{ij}}{\Pi_p} 
\right| .
\end{equation}
On the other hand, the quantities $\Pi_p$ are not holomorphic functions of the variable $M_{ij}$, and the limit depends on the direction along which $\Delta M_{ij} \to 0$. In such a case, we replace the RHS of \eq{limit} by the  maximum value taken by the limit when $\Delta M_{ij}$ approaches 0 from different directions in the complex plane ($\Delta M_{ij} = \alpha z$, $z\in\mathbb{C}$, $|z| = 1$, $\alpha \in\mathbb{R}$, $\alpha\to 0$). Since $\Pi^2_p$ can be considered as an holomorphic function of $M_{ij}$ and $M^*_{ij}$ (through \eq{identity}), we have 
\begin{equation}
\label{eq:precisedef}
\max_{z} \lim_{\Delta M_{ij} = \alpha z \to 0} 
\left| \frac{\Delta \Pi_p}{\Delta M_{ij}} 
\frac{M_{ij}}{\Pi_p} 
\right| = 
\left|\frac{\partial \Pi_p}{\partial M_{ij}} 
\frac{M_{ij}}{\Pi_p} 
\right| +
\left|\frac{\partial \Pi_p}{\partial M^*_{ij}} 
\frac{M^*_{ij}}{\Pi_p} 
\right| .
\end{equation}
In short, we define the LHS in \eq{assumption2} as the quantity in \eq{precisedef}. 

Let us now prove Proposition~\ref{p1}, which therefore states that the stability assumption implies 
\begin{equation}
\label{eq:assumption2precise}
\left|\frac{\partial \Pi_p}{\partial M_{ij}} 
\frac{M_{ij}}{\Pi_p} 
\right| +
\left|\frac{\partial \Pi_p}{\partial M^*_{ij}} 
\frac{M^*_{ij}}{\Pi_p} 
\right| \lesssim 1
\quad
\text{for } i,j,p = 1\ldots n ,
\end{equation}
but the viceversa is true only for $n=1,2$. 

The fact that \eq{assumption} implies \eq{assumption2precise} simply follows from \eq{precisedef}. We then need to prove that the viceversa is true for $n=1,2$, but not for $n\geq 3$. 

For $p = 1$ (any $n$), both \eq{assumption} and \eq{assumption2precise} are always verified. This proves the viceversa for $n=1$ and $n=2$, $p=1$. For $p = n = 2$, it is easy to see (for example from \eq{basic}, maximising with respect to $\theta$) that 
\begin{equation}
\label{eq:pippa}
\left|\frac{\partial \Pi_2}{\partial M_{ij}} 
\frac{M_{ij}}{\Pi_2} 
\right| +
\left|\frac{\partial \Pi_2}{\partial M^*_{ij}} 
\frac{M^*_{ij}}{\Pi_2} 
\right|  =  \fracwithdelims{|}{|}{M_{ij}M_{hk}}{m_1 m_2},
\end{equation}
where $M_{hk}$ is the matrix element opposite to $M_{ij}$ in $M$. Therefore \eq{assumption2precise} implies \eq{2x2A}, which implies that $M$ is stable. This proves the viceversa for $n=2$. 

Finally, we need to prove that the viceversa is not true for $n=3$. This is illustrated by the following Example. 
\begin{example} Consider the matrix
\label{ex:1}
\begin{equation}
\label{eq:example1}
M = \begin{pmatrix}
\epsilon' & 1 & 1 \\
0 & \epsilon & 0 \\
0 & 1 & 1
\end{pmatrix} ,
\end{equation}
where $\epsilon' \ll \epsilon \ll 1$. The singular values are approximately given by $\epsilon'/\sqrt{2}$, $\epsilon/\sqrt{2}$, $2$. Using for example the general relation 
\begin{equation}
\label{eq:pappa}
\left|\frac{\partial \Pi_p}{\partial M_{ij}} 
\frac{M_{ij}}{\Pi_p} 
\right| +
\left|\frac{\partial \Pi_p}{\partial M^*_{ij}} 
\frac{M^*_{ij}}{\Pi_p} 
\right|  =  \frac{|\sum_{\alpha\in I^p_{ij}} v^*_\alpha w_\alpha|}{m_1 m_2},
\end{equation}
\end{example}
\noindent one can see that \eq{assumption2precise} is verified. On the other hand, $M$ does not satisfy the stability assumption because $M_{12} M_{33} \gg m_2 m_3$, which contradicts eq.~(\ref{eq:3x3A}a).

\section{Ordering rows and columns}
\label{sec:ordering}

In this Appendix we discuss the results on the ordering of rows and columns of a $3\times 3$ hierarchical mass matrix $M$ mentioned in Section~\ref{sec:structure}. 

Let us first consider a hierarchical matrix $M$ that does not necessarily satisfy the stability assumption. The following lemma proves useful to discuss this case. 
\begin{lemma*}[ordering for unitary matrices] 
\label{lemma1}
Given a $3\times 3$ unitary matrix $U$, it is possible to permute its columns (rows) in such a way that 
\begin{equation}
\label{eq:orderingU}
|U_{33}| \geq  \frac{1}{\sqrt{3}}, \qquad
|\det U_{[23]}| \geq \frac{1}{\sqrt{6}} .
\end{equation}
Moreover, it is not possible to set more stringent general bounds: for any $\epsilon > 0$ there exists a unitary matrix $U$ for which it is not possible to find an ordering such that $|U_{33}| \geq  1/\sqrt{3} +\epsilon$ and $|\det U_{[23]}| \geq 1/\sqrt{6}+\epsilon$. 
\end{lemma*}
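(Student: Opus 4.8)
The plan is to prove the existence part using \emph{column} permutations only, the parenthetical ``(rows)'' being the transposed statement which follows identically, and to reduce the $2\times 2$ minor to a single entry through the adjugate. The first thing I would record is the elementary identity that for a $3\times 3$ unitary matrix $|\det U_{[23]}| = |U_{11}|$: since $\operatorname{adj}(U)=\det(U)\,U^\dagger$ for unitary $U$, the $(1,1)$ cofactor $\det U_{[23]}$ equals $\det(U)\,\overline{U_{11}}$ and $|\det U|=1$. This continues to hold for any column-permuted $U'$, which is again unitary; so after relabelling columns $(c_1,c_2,c_3)\mapsto(1,2,3)$ one has $|U'_{33}|=|U_{3c_3}|$ and $|\det U'_{[23]}|=|U'_{11}|=|U_{1c_1}|$. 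The task thus reduces to selecting two distinct columns $c_3\neq c_1$ for which $|U_{3c_3}|$ and $|U_{1c_1}|$ are both large.

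For the first bound I would choose $c_3$ maximising $|U_{3c}|$; as row $3$ is a unit vector in $\mathbb{C}^3$ its largest component satisfies $|U_{3c_3}|^2\ge 1/3$, so $|U'_{33}|\ge 1/\sqrt3$. The crux of the second bound is that the \emph{column} $c_3$ is itself a unit vector, whence $|U_{1c_3}|^2\le 1-|U_{3c_3}|^2\le 2/3$. The mass of row $1$ remaining outside column $c_3$ is then $\sum_{c\neq c_3}|U_{1c}|^2\ge 1/3$, and since only two columns remain the larger of the two, taken as $c_1$, obeys $|U_{1c_1}|^2\ge 1/6$. Hence $|\det U'_{[23]}|=|U_{1c_1}|\ge 1/\sqrt6$, and the single permutation sending $c_3,c_1$ to positions $3,1$ realises both bounds.

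For optimality I would exhibit the one-parameter family
\[
U(\beta)=\begin{pmatrix}
\sqrt{\beta/2} & \sqrt{\beta/2} & \sqrt{1-\beta}\\
1/\sqrt2 & -1/\sqrt2 & 0\\
-\sqrt{(1-\beta)/2} & -\sqrt{(1-\beta)/2} & \sqrt\beta
\end{pmatrix},\qquad \beta=\tfrac13+\delta,
\]
which has orthonormal columns (hence is unitary) and whose row-$3$ moduli are all at most $\sqrt\beta$, the maximum being attained only in column $3$ when $\delta>0$. Consequently every column ordering gives $|U'_{33}|\le\sqrt\beta$, and the only orderings reaching $|U'_{33}|\ge 1/\sqrt3$ are those placing column $3$ at position $3$; for all of these the surviving choice forces $|\det U'_{[23]}|=\sqrt{\beta/2}$. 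The best simultaneously attainable pair is therefore $(\sqrt\beta,\sqrt{\beta/2})\to(1/\sqrt3,1/\sqrt6)$ as $\delta\to0$, so given $\epsilon>0$ a small enough $\delta$ produces a matrix admitting no ordering with both $|U_{33}|\ge 1/\sqrt3+\epsilon$ and $|\det U_{[23]}|\ge 1/\sqrt6+\epsilon$.

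The main point to get right is that the admissible operation is a \emph{single-sided} permutation: were independent row \emph{and} column permutations allowed, one could place two entries in general position, and the very same chain of estimates would return the stronger constant $1/\sqrt3$ in place of $1/\sqrt6$, so both the value $1/\sqrt6$ and its optimality rest on keeping one index set fixed. Beyond that, the work is bookkeeping of constants, arranging that the two pigeonhole steps---$1/3$ extracted from row $3$ and then $1/6$ from the two surviving entries of row $1$---are saturated simultaneously by the extremal family above.
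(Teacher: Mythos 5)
Your proof is correct and follows essentially the same route as the paper's: pick the largest third-row entry for the $(3,3)$ position, then pigeonhole between the two remaining candidates for the minor (your adjugate identity $|\det U_{[23]}|=|U_{11}|$ is just the explicit form of the paper's identity $|\det U_{[23][13]}|^2+|\det U_{[23][23]}|^2=|U_{23}|^2+|U_{33}|^2$), and exhibit a near-extremal unitary matrix for optimality. Your one-parameter family differs in detail from the paper's example but plays the identical role, and both are valid.
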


\begin{proof} To prove the first bound in \eq{orderingU} it suffices to observe that $|U_{31}|^2+|U_{32}|^2+|U_{33}|^2 = 1$, so that $\max_{i}|U_{3i}|^2 \geq 1/3$. We can then permute the columns of $U$ in such a way that $|U_{33}| = \max_{i} |U_{3i}| \geq 1/\sqrt{3}$. Consider now an ordering in which $|U_{33}| \geq  1/\sqrt{3}$. As $|\det U_{[23][13]}|^2 + |\det U_{[23][23]}|^2 = |U_{23}|^2 + |U_{33}|^2 \geq 1/3$, the larger determinant will not be smaller than 1/6. We can then order the first two columns in such a way that $|\det U_{[23]}| = \max_{i=1,2} |\det U_{[23][i3]}| \geq 1/\sqrt{6}$. 

To prove that the bounds cannot be made more stringent, it suffices to consider the matrix
\begin{equation}
\label{eq:exampleU}
U = 
\begin{pmatrix}
\displaystyle
-\frac{1}{\sqrt{6}} - \frac{\epsilon}{\sqrt{2}} &
\displaystyle
-\frac{1}{\sqrt{6}} - \frac{\epsilon}{\sqrt{2}} &
\displaystyle
\sqrt{\frac{2}{3}} - \sqrt{2}\epsilon' \\
\displaystyle
\frac{1}{\sqrt{2}} & 
\displaystyle
- \frac{1}{\sqrt{2}} & 0 \\
\displaystyle
\frac{1}{\sqrt{3}} - \epsilon' &
\displaystyle
\frac{1}{\sqrt{3}} - \epsilon' &
\displaystyle
\frac{1}{\sqrt{3}} + \epsilon
\end{pmatrix} ,
\end{equation}
where $\epsilon$ and $\epsilon'$ are small and positive and such that $|U_{31}|^2 + |U_{32}|^2 + |U_{33}|^2 = 1$ (the matrix $U$ is then unitary). In order to have $|U_{33}| \geq 1/\sqrt{3}$, the third column should not be permuted. Moreover, $|\det U_{[23][13]}| = |\det U_{[23][23]}| = 1/\sqrt{6} + \epsilon/\sqrt{2}$. Therefore, whatever is the ordering chosen for the first two columns, we have $|\det U_{[23]}| = 1/\sqrt{6} + \epsilon/\sqrt{2}$, which can be made arbitrarily close to $1/\sqrt{6}$. 
\end{proof}

\noindent Using the previous Lemma, we can show the following proposition. 
\begin{proposition*}
Let $M$ be a hierarchical $3\times 3$ matrix. Then it is possible to permute rows and columns in such a way that 
\begin{equation}
|M_{33}| \gtrapprox  \frac{m_3}{3}, \qquad
|\det M_{[23]}| \gtrapprox \frac{m_2m_3}{6} .
\label{eq:orderingM}
\end{equation}
Moreover, it is not possible to set more stringent general bounds. 
\end{proposition*}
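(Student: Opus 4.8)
The plan is to reduce the statement to the unitary case already settled by the Lemma, via the singular value decomposition $M = V^T M_D U$ from \eq{diagonalization}, with $M_D = \diag(m_1,m_2,m_3)$ and $U,V$ unitary. Writing $M_{ij} = \sum_k m_k V_{ki} U_{kj}$, the hierarchy $m_1 \ll m_2 \ll m_3$ makes the top singular value dominate, so that $M_{33} = m_3 V_{33} U_{33} + \ord{m_2}$ and hence $|M_{33}| = m_3 |V_{33}|\,|U_{33}| + \ord{m_2}$. For the $2\times 2$ minor I would expand $\det M_{[23]}$ by Cauchy--Binet applied to the product $V^T M_D U$: since $M_D$ is diagonal only the ``diagonal'' index set $K$ survives in each term, giving
\begin{equation*}
\det M_{[23]} = \sum_{|K|=2} \det V_{[K][23]}\Big(\prod_{k\in K} m_k\Big)\det U_{[K][23]},
\end{equation*}
whose leading term is $m_2 m_3\,\det V_{[23]}\det U_{[23]}$, the remaining terms being suppressed by $\ord{m_1/m_2}$. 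Thus $|\det M_{[23]}| = m_2 m_3\,|\det V_{[23]}|\,|\det U_{[23]}| + \ord{m_1 m_3}$.

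Next I would observe that permuting the columns of $M$ permutes only the columns of $U$, while permuting the rows of $M$ permutes only the columns of $V$ (because $M_{\pi(i)j}=\sum_k m_k V_{k\pi(i)} U_{kj}$), so the two reorderings can be chosen independently. Applying the Lemma to $U$ (as a column reordering) I can fix an ordering of the columns of $M$ with $|U_{33}|\geq 1/\sqrt3$ and $|\det U_{[23]}|\geq 1/\sqrt6$ simultaneously; applying the Lemma to $V$ I can independently fix an ordering of the rows of $M$ with $|V_{33}|\geq 1/\sqrt3$ and $|\det V_{[23]}|\geq 1/\sqrt6$. Substituting into the two leading-order expressions gives $|M_{33}| \geq m_3/3 + \ord{m_2}$ and $|\det M_{[23]}|\geq m_2 m_3/6 + \ord{m_1 m_3}$, i.e.\ the bounds \eq{orderingM} in the $\gtrapprox$ sense, the small corrections being controllable by taking the hierarchy sufficiently strong.

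For the optimality claim I would take $U=V=U_\ast$ equal to the extremal unitary matrix of \eq{exampleU} and $M = U_\ast^T M_D U_\ast$ with a strong hierarchy. Since a row or column permutation of $M$ only reshuffles the columns of a copy of $U_\ast$, the entry landing in the $(3,3)$ slot of either factor is always drawn from the third row of $U_\ast$, all of whose entries are $\leq 1/\sqrt3 + \epsilon$; together with the Lemma's tightness for the $2\times2$ minor this forces $|U_{33}|,|V_{33}|\leq 1/\sqrt3+\epsilon$ and $|\det U_{[23]}|,|\det V_{[23]}|\leq 1/\sqrt6+\ord{\epsilon}$ for every ordering. Feeding these back into the leading-order formulas shows that no reordering can push $|M_{33}|$ above $m_3/3$ or $|\det M_{[23]}|$ above $m_2 m_3/6$ by more than an arbitrarily small amount, establishing that the constants $1/3$ and $1/6$ cannot be improved.

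I expect the main obstacle to be the optimality half rather than the achievability half: one must check that allowing independent permutations of rows and columns does not let the extremal example exploit its large off-diagonal entry (the $\sqrt{2/3}$ in the first row of \eq{exampleU}) to beat the bound, and the resolution is precisely the observation that the $(3,3)$ entries of both factors are always read off the fixed third row of $U_\ast$. Verifying that the $\ord{m_1/m_2}$ and $\ord{m_2/m_3}$ corrections stay below the $\epsilon$ tolerance of the $\gtrapprox$ notation is routine once the hierarchy is taken strong enough.
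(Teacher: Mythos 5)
Your construction is the same as the paper's: the singular value decomposition $M=V^TM_DU$, the observation that row (column) permutations of $M$ act as column permutations of $V$ (of $U$), and an application of the Lemma to $U$ and $V$ separately, with the hierarchy controlling the subleading terms. The achievability half is correct and in fact more explicit than the paper's terse version --- the Cauchy--Binet expansion of $\det M_{[23]}$ makes the paper's ``$\approx$'' precise --- so there is nothing to object to there.

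The optimality half, however, rests on a false intermediate claim. Taking $U=V=U_*$ with $U_*$ as in \eq{exampleU}, it is \emph{not} true that $|\det U_{[23]}|\leq 1/\sqrt{6}+\ord{\epsilon}$ for every column ordering: permuting the columns so that the original first and second columns occupy positions $2$ and $3$ gives $|\det U_{[23]}|=|\det (U_*)_{[23][12]}|=\sqrt{2}\,(1/\sqrt{3}-\epsilon')\approx 0.82$, roughly twice $1/\sqrt{6}$. Your proposed ``resolution'' (that the $(3,3)$ entries of both factors are always read off the third row of $U_*$) handles only the $|M_{33}|$ bound, not the minor. The statement you actually need --- and the one the Lemma itself asserts --- is that the two \emph{improved} bounds cannot hold simultaneously: any ordering that makes $|\det U_{[23]}|$ (or $|\det V_{[23]}|$) large necessarily moves the third column of $U_*$ out of the third position, which forces $|U_{33}|= 1/\sqrt{3}-\epsilon'$ and so prevents $|M_{33}|$ from exceeding $m_3/3$ by more than $\ord{\epsilon}\,m_3$; conversely, the orderings that keep $|M_{33}|$ near $m_3/3$ fix the third columns of $U$ and $V$ and thereby pin $|\det U_{[23]}|$ and $|\det V_{[23]}|$ at $1/\sqrt{6}+\ord{\epsilon}$. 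With this case split the non-improvability goes through, and this is what the paper means when it says the claim follows ``as in the Lemma''.
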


\begin{proof} 
It suffices to use the singular value decomposition $M = V^T M_D U$, $U,V$ unitary, $M_D = \diag(m_1,m_2,m_3)$, $0 < m_1 \ll m_2 \ll m_3$. We can then permute the rows and columns of $M$ (i.e.\ the columns of $U$ and $V$) in such a way that $U$ and $V$ satisfy \eq{orderingU}. We then have $|\det M_{33}| \approx |V_{33}\, m_3 U_{33}| \geq m_3/3$ (alternatively, we could have observed that $m^2_3 \approx \Pi^2_1 = \sum_{hk} |M_{hk}|^2$, so that $\max_{hk} |M_{hk}|^2 \gtrapprox m^2_3/9$). Moreover, $|\det M_{[23]}| \approx |\det V_{[33}| |\det U_{]23]}| m_2 m_3 \geq m_2 m_3 /6$. Using the relations above it is also possible to show, as in the Lemma, that the bounds in \eq{orderingM} cannot be made more stringent. 
\end{proof}

\noindent Let us now assume that $M$ satisfies the stability assumption. It is then possible to get stronger bounds on $|M_{33}|$, $|\det M_{[23]}| $. 
\begin{proposition*}
Let $M$ be a hierarchical $3\times 3$ matrix satisfying the stability assumption. Then it is possible to permute rows and columns in such a way that 
\begin{equation}
\label{eq:orderingMbis}
|M_{33}| \gtrapprox  \frac{m_3}{\sqrt{3}}, \qquad
|\det M_{[23]}| \gtrapprox \frac{m_2 m_3}{\sqrt{6}} .
\end{equation}
\end{proposition*}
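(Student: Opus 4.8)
The plan is to read off both inequalities from the two exact identities in \eq{identity}: for $p=1$, $\Pi_1^2=\sum_{hk}|M_{hk}|^2$, and for $p=2$, $\Pi_2^2=\sum_{h<k,\,l<m}|\det M_{[hk][lm]}|^2$, which for hierarchical eigenvalues satisfy $\Pi_1\approx m_3$ and $\Pi_2\approx m_2m_3$. Each sum has nine terms, so a naive pigeonhole bounds the largest term by one ninth of the total; the point is that the stability assumption, through the product constraints of Proposition~\ref{p2} (\eq{3x3}), forces most of these terms to be negligible, improving the counting from nine to three in the first sum and from nine to six in the second. Everything is done at leading order in $m_1\ll m_2\ll m_3$, consistently with the ``$\gtrapprox$'' appearing in \eq{orderingMbis}.

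For the first bound I would single out the largest entry $M_{i_0h_0}$, necessarily of order $m_3$ because $\Pi_1\approx m_3$. Testing the constraint $|M_{ih}M_{jk}|\lesssim m_2m_3$ against it shows that every entry lying in neither row $i_0$ nor column $h_0$ is $\lesssim m_2$, hence negligible in $\Pi_1^2$. Denoting by $a$ and $b$ the largest off-corner entries of row $i_0$ and of column $h_0$, the same constraint reads $ab\lesssim m_2m_3\ll|M_{i_0h_0}|^2$; combined with $a,b\le|M_{i_0h_0}|$ this forces $a^2+b^2\lesssim|M_{i_0h_0}|^2$. Summing the surviving contributions, $\Pi_1^2\le|M_{i_0h_0}|^2+2a^2+2b^2+O(m_2^2)\lesssim 3|M_{i_0h_0}|^2$, so that $|M_{i_0h_0}|\gtrapprox m_3/\sqrt3$, and permuting the largest entry into position $(3,3)$ yields the first inequality.

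For the second bound the same $a,b$ dichotomy tells me that the order-$m_3$ entries either all lie on one line through $M_{i_0h_0}$ (a row if $a\sim m_3$, a column if $b\sim m_3$, forcing $b\lesssim m_2$ or $a\lesssim m_2$ respectively) or reduce to the single corner. In the single-line case, say row $i_0$, every entry off that row is $\lesssim m_2$, so each of the three $2\times2$ minors avoiding row $i_0$ is $\lesssim m_2^2$ and hence negligible against $m_2m_3$; only the remaining six minors survive in $\Pi_2^2$, each bounded by $m_2m_3$ through \eq{supdet}. (The corner-only case removes five of the nine minors, which is stronger.) Thus $\Pi_2^2\lesssim 6\max|\det M_{[hk][lm]}|^2$, so some minor is $\gtrapprox m_2m_3/\sqrt6$, and I would permute rows and columns to display it as $\det M_{[23]}$.

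I expect the main difficulty to be twofold. The delicate point in the concentration arguments is the intermediate regime in which $a$ and $b$ are both $\ll m_3$ yet not $\lesssim m_2$: there the clean single-line picture fails and one must lean on the trade-off $ab\lesssim m_2m_3$ to check that the stray minors stay below $m_2m_3$, and verify that the worst case is genuinely the factor six rather than something weaker. The second difficulty is to realise both inequalities with one and the same permutation, placing the largest entry at $(3,3)$ while bringing the dominant minor into the $[23]$ block. This is where stability enters a second time: because it forbids cancellations inside the minors (the interpretation of \eq{3x3} discussed after Proposition~\ref{p2}), the dominant minor is built from a genuinely large row-$i_0$ entry, so the column carrying that entry can be sent to the third column without spoiling $|M_{33}|\gtrapprox m_3/\sqrt3$. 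Reconciling the two orderings in this way is routine once the two concentration statements are in place, but it is the step that ties the argument together.
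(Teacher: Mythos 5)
Your route is genuinely different from the paper's: the paper classifies stable matrices by the number $N$ of $2\times 2$ sub-matrices with unsuppressed determinant and obtains \eqs{boundsN} by direct inspection, explicitly declining to write out the ``lengthy and not particularly inspiring'' case analysis, whereas you extract both constants from the two identities in \eq{identity} by counting how many terms the stability constraints allow to survive. Your first bound is correct and complete as written: the dichotomy $ab\lesssim m_2m_3\ll |M_{i_0h_0}|^2$ does give $\Pi_1^2\lessapprox 3|M_{i_0h_0}|^2$, and the constant is saturated by a ``democratic'' third row. The count of at most six surviving minors in $\Pi_2^2$ is also right, and your worry about the intermediate regime is in fact harmless: whenever $a\ll m_3$ the two row-$i_0$ line minors are bounded by $\ord{a\, m_2}\ll m_2m_3$ and drop out as well, so the six-term count only ever bites when one line genuinely carries several $\ord{m_3}$ entries.

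The gap is in the last step, where you tie the two bounds to a single permutation. If the maximal minor is one of the two that use row $i_0$ but not column $h_0$, moving it into the $[23]$ block forces some other row-$i_0$ entry into position $(3,3)$, and your appeal to the absence of cancellations does not make that entry large enough: from $|\det M_{[i_0j][kk']}|\gtrapprox m_2m_3/\sqrt 6$ and $|M_{jk}|,|M_{jk'}|\lesssim m_2$ one only gets $\max(|M_{i_0k}|,|M_{i_0k'}|)\gtrsim m_3/(2\sqrt 6)$, well below $m_3/\sqrt 3$; moreover stability only forbids cancellations between terms much larger than the result, not $\ord{1}$ partial cancellations between the two products inside a minor. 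The correct repair is not to move the line minor but to show that one of the four minors containing the position $(i_0,h_0)$ --- the only ones that can occupy the $[23]$ block while $M_{i_0h_0}$ sits at $(3,3)$ --- already reaches $m_2m_3/\sqrt 6$. This needs a small extra computation: if all four corner minors were below $m_2m_3/\sqrt 6$, the two line minors would have to carry more than $\tfrac{2}{3}\Pi_2^2$, but each is bounded by $(|M_{i_0k}|+|M_{i_0k'}|)\times\ord{m_2}$ with the $|M_{i_0k}|$ themselves controlled by the corner minors and by $|M_{i_0k}|\le |M_{i_0h_0}|$, and the two requirements turn out to be incompatible precisely at the constant $\sqrt 6$ --- the borderline being the paper's extremal configuration, where the two corner minors equal $m_2m_3/\sqrt 6$ exactly while the line minor equals $2m_2m_3/\sqrt 6$. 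Until that step is supplied, you have proven the two inequalities only for possibly different permutations, which is weaker than the statement.
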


\begin{proof} 
The result in \eq{orderingMbis} can be proven by direct inspection of the structures allowed by Proposition~\ref{p2}. In particular, we can classify the possible structures in terms of the number $N$ of $2\times 2$ sub-matrices whose determinant is not suppressed with respect to $m_2 m_3$. Note that the stability assumption allows at most $N=4$ such sub-matrices. Indeed, for each sub-determinant giving a unsuppressed contribution, \eq{cof} forces one matrix element to be of  order $m_1$ or smaller, and direct inspection shows that with more than 4 matrix elements of order $m_1$ or smaller, it is not be possible to have 4 or more unsuppressed sub-matrices. Then, direct inspection shows that 
\begin{equation}
\label{eq:boundsN}
\begin{aligned}
|M_{33}| & \gtrapprox  \frac{m_3}{\sqrt{3}}, &  |\det M_{[23]}| \gtrapprox \frac{m_2m_3}{\sqrt{4}} & \qquad \text{($N=4$)} \\
|M_{33}| & \gtrapprox  \frac{m_3}{\sqrt{3}}, &  |\det M_{[23]}| \gtrapprox \frac{m_2m_3}{\sqrt{6}} & \qquad \text{($N=3$)} \\
|M_{33}| & \gtrapprox  \frac{m_3}{\sqrt{3}}, &  |\det M_{[23]}| \gtrapprox \frac{m_2m_3}{\sqrt{2}} & \qquad \text{($N=2$)} \\
|M_{33}| & \gtrapprox  \frac{m_3}{\sqrt{2}}, &  |\det M_{[23]}| \gtrapprox \frac{m_2m_3}{\sqrt{1}} & \qquad \text{($N=1$)} .
\end{aligned}
\end{equation}
We will not go through the lengthy and not particularly inspiring proof, but we make three observations useful to determine the possible structures of $M$ for a given $N$ (and thus to prove \eqs{boundsN}):
\begin{itemize}
\item
The matrix entries must satisfy $|M_{ij}| \leq m_3$, $|M_{ih}M_{jk}| \lesssim m_2 m_3$, $|M_{ih}M_{jk} M_{lm}| \lesssim m_1 m_2 m_3$ when rows and columns are all different. 
\item
The possible structures can be classified by the position of the entries complementary (i.e.\ with no common row or column) to the $2\times 2$ unsuppressed sub-determinants, which by \eq{cof} are not much larger than $m_1$. All remaining $2\times 2$ sub-determinants must be suppressed with respect to $m_2 m_3$. 
\item
Suppose only the $2\times 2$ sub-matrices in the last two rows have unsuppressed determinants and let us consider the two sub-matrices that include the third column elements $M_{23}$ and $M_{33}$, $M_{[23][i3]}$, $i=1,2$. At least one of the two must have $|\det M_{[23][i3]}|\gtrapprox 1/\sqrt{6}$. The latter statement can be shown by observing that if  $|\det M_{[23][i3]}| < \epsilon$ for both $i$, then $(M_{21},M_{22},M_{23}) = (M_{31},M_{32},M_{33}) (M_{23}/M_{33})+ (\delta M_{21},\delta M_{22},0)$, with $|\delta M_{21}|$, $|\delta M_{22}| < \epsilon/ |M_{33}|$ and $|\det M_{[23][21]}| = |\delta M_{21} M_{32} - \delta M_{22} M_{31}| < 2\epsilon$. Therefore, $m^2_2 m^2_3 \approx \Pi^2_2 \approx \sum_{i=1}^3 |\det M_{[23][i3]}|^2 < 6 \epsilon^2$ and $\epsilon \gtrapprox m_2 m_3/\sqrt{6}$. 
\end{itemize}
\end{proof}

\section{Flavor model for $U_\nu = {\bf 1}$}
\label{sec:diagonalneutrinos}

In this appendix we briefly present, as a proof of existence, an abelian flavor model which realises the case in which the neutrino mass matrix is diagonal and the lepton mixing arises from the charged lepton sector, closely related to the one presented in Appendix A of ref.~\cite{Altarelli:2004jb}, albeit with no need of introducing extra messenger fields.
We do so in the context of a supersymmetric $\text{SU}(5)$ grand unified theory.
We introduce a flavor symmetry $F = \U(1)_{F_0} \times \U(1)_{F_1} \times \U(1)_{F_2} \times \U(1)_{F_3} \times \U(1)_{F_4} $.

The relevant field content, as well as charge assignment is given by
\begin{center}
\begin{tabular}{c | cccccccc}
		& $10_1$ & $10_2$& $10_3$ & $\bar{5}_1$ & $\bar{5}_2$ & $\bar{5}_3$ & $5_H$ & $\bar{5}_H$ \\ \hline
	$F_0$ & 3 & 2 & 0 & 0 & 0 & 0 & 0 & 0 \\
	$F_1$ & 2 & 2 & 1 & 1 & 1 & 0 & 0 & 0 \\
	$F_2$ & 2 & 2 & 2 & 1 & 0 & 0 & 0 & 0 \\
	$F_3$ & 2 & 2 & 2 & 0 & 1 & 0 & 0 & 0 \\
	$F_4$ & 2 & 2 & 2 & 0 & 0 & 1 & 0 & 0 \\
\end{tabular}~.
\end{center}
The flavon fields, and their charge assignment, are
\begin{center}
\begin{tabular}{c | cccccccccc}
		& $\theta_0$ & $\theta_1$ & $\theta_2$ & $\theta_3$ & $\theta_4$ & $\theta_5$ & $\theta_6$ & $\theta_7$ & $\theta_8$ & $\theta_9$  \\ \hline
	$F_0$ & -1 & 0 & 0 & 0 & 0 & 0 & 0 & 0 & 0 & 0 \\
	$F_1$ & 0 & -2 & -1 & 0 & 0 & 0 & 0 & 0 & 0 & 0 \\
	$F_2$ & 0 & 0 & 0 & -2 & 0 & 0 & -3 & 0 & 0 & -4 \\
	$F_3$ & 0 & 0 & 0 & 0 & -2 & 0 & 0 & -3 & 0 & -4 \\
	$F_4$ & 0 & 0 & 0 & 0 & 0 & -2 & 0 & 0 & -3 & -4 \\
\end{tabular}~.
\end{center}
The effective superpotential at low energy can be written as
\be
	W = y_{ij} 10_i 10_j 5_H + \eta_{ij} 10_i \bar{5}_j \bar{5}_H + \frac{c_{ij}}{\Lambda} (\bar{5}_i 5_H)(\bar{5}_j 5_H),
\ee
where $\Lambda$ is a high mass scale related to the flavor dynamics and the other couplings are adimensional and include suitable powers of $\langle \theta_i \rangle / \Lambda \sim \lambda \ll 1$ (for simplicity, all vev are assumed to be of the same order) in order to make each term invariant under the symmetry $F$.
This fixes the up-type quark mass matrix to be
\begin{equation}
\label{eq:upquarks}
M_u \sim \hat{y} \langle 5_H \rangle \lambda^2 
\begin{pmatrix}
	\lambda^7 & \lambda^6 & \lambda^4 \\
	\lambda^6 & \lambda^5 & \lambda^3 \\
	\lambda^4 & \lambda^3 & 1
\end{pmatrix} ,
\end{equation}
the charged lepton one to be
\begin{equation}
\label{eq:LeptMassMatr}
M_E \sim \hat{\eta} \langle \bar{5}_H \rangle \lambda^4 
\begin{pmatrix}
	\lambda^4 & \lambda^4 & \lambda^3 \\
	\lambda^3 & \lambda^3 & \lambda^2 \\
	1 & 1 & 1
\end{pmatrix} ,
\end{equation}
and finally the neutrino masses are diagonal and with inverted ordering, proportional to
\be
	M^\nu_{ij} = \delta_{ij} \frac{ \lambda \langle 5_H \rangle^2}{\Lambda} (\delta_{i1} \hat{c}_{1} \lambda + \delta_{i2} \hat{c}_{2} \lambda + \delta_{i3} \hat{c}_{3})~.
\ee
Above we defined the $\mathcal{O}(1)$ parameters $\hat{y}_{ij}$, $\hat{\eta}_{ij}$ and $\hat{c}_{i}$.
Notice that in eq.~\eqref{eq:LeptMassMatr} we reproduced the mass matrix of eq.~\eqref{eq:MEfull2}.
Finally, let us point out that the only symmetries necessary in order to reproduce the texture of eq.~\eqref{eq:LeptMassMatr} in the charged lepton sector (albeit with a different overall scaling with $\lambda$) are the first two $\U(1)$ factors, $\U(1)_{F_0} \times \U(1)_{F_1}$, and the only flavons necessary are $\theta_0, \theta_1$ and $\theta_2$, with the same charges as specified above.

\newpage

\end{document}